\newcommand{\gen}[1]{\partial_{#1}}
\newcommand{\pr}[1]{\rm pr^{(#1)}}
\newcommand{\todo}[1][\null]{\ensuremath{\clubsuit}}
\long\def\@makecaption#1#2{%
  \vskip\abovecaptionskip\footnotesize
  \sbox\@tempboxa{#1. #2}%
  \ifdim \wd\@tempboxa >\hsize
    #1. #2\par
  \else
    \global \@minipagefalse
    \hb@xt@\hsize{\hfil\box\@tempboxa\hfil}%
  \fi
  \vskip\belowcaptionskip}
\newtheorem{theorem}{Theorem}
\newtheorem{corollary}[theorem]{Corollary}
\newtheorem{proposition}[theorem]{Proposition}
\newtheorem*{problem*}{Problem}
{\theoremstyle{definition}
\newtheorem{definition}[theorem]{Definition}

\newtheorem{remark}[theorem]{Remark}
\newtheorem*{remark*}{Remark}
}
\begin{document}

\par\noindent {\LARGE\bf
Abelian Lie symmetry algebras of two-dimensional quasilinear evolution equations
\par}

\vspace{4mm}\par\noindent {\large Rohollah Bakhshandeh-Chamazkoti$^{\dag}$
} \par\vspace{2mm}\par

\vspace{2mm}\par\noindent {\it
$^{\dag}$~Department of Mathematics, Faculty of Basic Sciences,
Babol  Noshirvani University of Technology, Babol, Iran.\par
}

\vspace{2mm}\par\noindent
\textup{E-mail:} r\_bakhshandeh@nit.ac.ir\!
\par

\vspace{8mm}\par\noindent\hspace*{10mm}\parbox{140mm}{\small
We carry out the classification of abelian Lie symmetry algebras of
two-dimensional second-order nondegenerate quasilinear evolution equations.
It is shown that such an equation is linearizable if it admits an abelian Lie symmetry algebra that is of dimension greater
than or equal to five or of dimension greater than or equal to three with rank one.}\par\vspace{4mm}

\section{Introduction}
Transformation properties of evolution equations especially Lie group
of point symmetries has been widely studied
because of many practical benefits that such knowledge gives and also because of the variety
of physical applications that are modeled by these equations.

The purpose of this paper is  to classify all admissible abelian symmetry algebras
of the class ${\mathcal L}$ of two-dimensional  second-order non-degenerate quasilinear evolution equations, which are of the general form
\begin{gather}\label{eveqn}
\begin{split}
u_t={}&F(t, x, y, u, u_x, u_y)u_{xx}+G(t, x, y, u, u_x, u_y)u_{yy}+H(t, x, y, u, u_x, u_y)u_{xy}
\\{}&+K(t, x, y, u, u_x, u_y),
\end{split}
\end{gather}
where $F, ~G, ~H$ and $K$ are some non-zero smooth functions in terms of $t, x, y, u, u_x, u_y$
variables with $FG-H^2/4\neq 0$. The present paper is the  first part of a project to
classify Lie symmetry algebras of differential equations of the class ${\mathcal L}$ and
 develops  the methods which were applied in \cite{Basarab-Horwath2013}.

In \cite{Basarab-Horwath2013}, the authors gave a complete Lie point symmetry classification of all third-order evolution
equations
\begin{align}\label{peq}
u_t=F(t, x, u, u_x, u_{xx})u_{xxx}+G(t, x, u, u_x, u_{xx}),
\end{align}
 where $F\neq 0$ and $G$ are two  functions in terms of $t, x, u, u_x, u_{xx}$  which \eqref{peq} admit semi-simple symmetry algebras.
 Before in \cite{Gungor2004}, this method was applied for equation \eqref{peq}
 for $F=1$. This method for finding symmetries and integrability properties for $(1+1)$-dimension evolution equations
  were carried out in \cite{Basarab-Horwath2001, Gazeau92, Gungor96, Gungor2004, Lahno05, Zhdanov99, Zhdanov07},
 and recently this method is developed in
 \cite{Basarab-Horwath2017, Popovych2017, Kurujyibwami2018, Kontogiorgis2018, Vaneeva2020}.

In \cite{Kontogiorgis2018}, the authors carried out enhanced symmetry analysis of the two-dimensional
Burgers system
\begin{align}\label{sysburg}
\left\{ \begin{array}{lcl}
u_t + uu_x +vu_y - u_{xx} - u_{yy} = 0,\\
 v_t +uv_x + vv_y - v_{xx} - v_{yy} = 0,
\end{array}\right.
\end{align}
Under the constraint $v = 0$, the second equation of the system \eqref{sysburg} is equaled to zero
and its first equation reduces to a $(1 + 2)$-dimensional generalization of the Burgers
equation
\begin{align}\label{eqburg}
u_t + uu_x - u_{xx} - u_{yy} = 0,
\end{align}
which was deduced in \cite{Rajaee2008} as an equation for the wave phase of two-dimensional sound
simple waves in weakly dissipative flows. Therein, symmetry analysis of this equation was
carried out, which included the first exhaustive study of its Lie reductions in an optimized
way and the construction of several new families of its exact solutions.
The group classification of $(1+2)$-dimensional diffusion-convection equation \eqref{eqburg} was also investigated  in \cite{Demetriou2008}.
Recently, Bihlo and Popovych in \cite{Popovych2020}, classified zeroth-order conservation laws of systems from the class of two-dimensional
shallow water equations with variable bottom topography.

The present paper is organized as follows: Section 2 is presented to some basic definitions
of equivalence groupoid and then we show the infinitesimal Lie point symmetry of differential equations of the class ${\mathcal L}$,
has the form
$$Q=a(t)\gen t +b(t, x, y, u)\gen x+c(t, x, y, u)\gen y +\varphi(t, x, y, u)\gen u, $$
where $a(t),\; b(t, x, y, u),\; c(t, x, y, u)$ and $\varphi(t, x, y, u)$ are
smooth functions of their arguments which $Q$ is equal to
$\gen t$ or $\gen u$
as a canonical form under the equivalence group \eqref{eqgrp} and also the $\gen x, \gen y$ and $\gen u$ are equivalent
as canonical forms under  this equivalence group  \eqref{eqgrp}.

In section 3, we compute the determining equations by applying Lie symmetry conditions.

In section 4, we classify the admissible abelian symmetry algebras and we prove that any equation
of the the class ${\mathcal L}$ equipped with an five-dimensional abelian symmetry algebra or with dimension bigger than five, or
three-dimensional abelain Lie algebra or with dimension bigger than three with rank-one realization, is linearizable.
Finally, a table included some invariant solutions corresponding to abelian Lie  symmetries for non-linear cases of
the evolution equation is listed.

\section{Equivalence groupoid}
Let  ${\mathcal L}_\theta$ denote a system of differential equations of the form $L(x,u^{(r)},\theta^{(q)}(x,u^{(r)}))=0$, where $x, u$,
and $u^{(r)}$  are the tuples of independent variables, of dependent variables and of derivatives of $u$
with respect to $x$ up to order $r$. The tuple of functions $L = (L^1,\dots,L^l)$ of $(x,u^{(r)}, \theta)$ is fixed
whereas the tuple of functions $\theta = (\theta^1,\dots,\theta^k)$ of $(x,u^{(r)})$ runs through the solution set~$\mathcal S$ of
an auxiliary system of differential equations and inequalities in $\theta$, where $x$ and $u^{(r)}$ jointly play
the role of independent variables. Thus, the class of (systems of) differential equations ${\mathcal L}$ is the
parameterized family of systems ${\mathcal L}_\theta$ with $\theta $ running through the set~$\mathcal S$. The components of $\theta$ are
called the arbitrary elements of the class ${\mathcal L}$.
The equivalence groupoid~$\mathcal G^\sim$ of the class ${\mathcal L}$ consists of
admissible transformations of this class,
$$\mathcal G^\sim=\mathcal G^\sim(\mathcal L) = \big\{(\theta,\xi,\tilde\theta)\mid\theta,\tilde\theta\in\mathcal S, \xi \in \mathrm T(\theta,\tilde\theta)\big\}.$$
Here  ${\mathcal L}_\theta$ and  ${\mathcal L}_{\tilde{\theta}}$ are
the source and the target systems, belonging to the class ${\mathcal L}$ and corresponding to the values $\theta$
and $\tilde{\theta}$ of the arbitrary-element tuple, respectively, which in turn runs through the solution set~$\mathcal S$ of
the auxiliary system $S$ of equations and inequalities for arbitrary elements,
and $\xi$ is a point transformation relating the equations ${\mathcal L}_\theta$ and ${\mathcal L}_{\tilde{\theta}}$
(the set of all such transformations is denoted by $\mathrm T(\theta,\tilde\theta)$) \cite{Bihlo2012}.

\begin{definition}\label{def:UsualEquivGroup}
The \emph{(usual) equivalence group} $G^{\sim}=G^{\sim}(\mathcal L)$ of the class~$\mathcal L$
is the (pseudo)group of point transformations in the space of $(x,u^{(r)},\theta)$ which are projectable to the space of $(x,u^{(r')})$ for any $0\le r'\le r$,
are consistent with the contact structure on the space of $(x,u^{(r)})$ and preserve the  class~$\mathcal L$.
\end{definition}

\begin{definition}\label{def:Normalized}
The class of differential equations~${\mathcal L}$ is called \emph{normalized}
if its equivalence groupoid~$\mathcal G^\sim$ is induced by its equivalence group~$G^\sim$,
meaning that for any triple $(\theta,\xi,\tilde\theta)$ from~$\mathcal G^\sim$,
there exists a transformation $\Xi$ from $G^\sim$ such that $\tilde\theta =\Xi_*\theta$ and $\xi=\Xi|_{(x,u)}$.
\end{definition}

For the class~$\mathcal L$ of equations of the form~\eqref{eveqn},
the tuple~$\theta$ of arbitrary elements is constituted by four arbitrary smooth functions~$F$, $G$, $H$ and~$K$ of $(t,x,y,u,u_t,u_x,u_y)$
that satisfy the auxiliary system~$S$ consisting of the single inequality $FG-H^2/4\neq 0$,
that is, $\theta=(F,G,H,K)$ and $\mathcal S=\{(F,G,H,K)\mid FG-H^2/4\neq 0\}$.
The usual equivalence group~$\mathcal G^\sim$ of the class~$\mathcal L$ can be assumed to act in the space with coordinates $(t,x,y,u,u_t,u_x,u_y,F,G,H,K)$
and to consist of the point transformations of the form
\begin{gather}\label{geeqgr}
\begin{split}
&t'=T(t,x,y,u),\quad
 x'=X(t,x,y,u),\quad
 y'=Y(t,x,y,u),\quad
 u'=U(t,x,y,u),\\
&u'_{t'}=U^t(t,x,y,u,u_x,u_y),\quad
 u'_{x'}=U^x(t,x,y,u,u_x,u_y),\quad
 u'_{y'}=U^y(t,x,y,u,u_x,u_y),\\
&F'=F(t,x,y,u,u_x,u_y,F,G,H,K),\quad
 G'=G(t,x,y,u,u_x,u_y,F,G,H,K),\\
&H'=H(t,x,y,u,u_x,u_y,F,G,H,K),\quad
 K'=K(t,x,y,u,u_x,u_y,F,G,H,K),
\end{split}
\end{gather}
that preserve the contact structure of the space with coordinates $(t,x,y,u,u_x,u_y)$
and map each equation~$\mathcal L_\theta$ from the class~${\mathcal L}$ to an equation~$\mathcal L_{\theta'}$ from the same class,
\begin{gather}\label{eveqnnew}
\begin{split}
u'_{t'}={}&F'(t',x',y',u',u'_{x'},u'_{y'})u'_{x'x'}+G'(t',x',y',u',u'_{x'},u'_{y'})u'_{y'y'}+H'(t',x',y',u',u'_{x'},u'_{y'})u'_{x'y'}\\
&{}+K'(t',x',y',u',u'_{x'},u'_{y'}),
\end{split}
\end{gather}

\begin{theorem}\label{pointequiv}
A point transformation~$\xi$ in the space with coordinates $(t, x, y, u)$ transforms
an equation~\eqref{eveqn} from the class $\mathcal L$
to an equation \eqref{eveqnnew} from the same class if and only if the components of $\xi$ are of the form
\begin{equation}\label{eqgrp}
t'=T(t),\quad x'=X(t, x, y, u),\quad y'=Y(t, x, y, u), \quad u'=U(t, x, y ,u).
\end{equation}
with the condition $T_t\displaystyle{\left|\frac{\partial(X, Y, U)}{\partial(x, y, u)}\right|\neq0}$.
The transformed arbitrary elements $F'$, $G'$, $H'$ and $K'$ are given  by
\begin{eqnarray} \label{F}
 F'&=&- \frac{1}{T_t\zeta_u^2}\Big[(X_x\zeta_u-X_u\zeta_x)^2~F+(X_y\zeta_u-X_u\zeta_y)^2~G+(X_x\zeta_u-X_u\zeta_x)(X_y\zeta_u-X_u\zeta_y)~ H\Big]\\ \label{G}
 G'&=&- \frac{1}{T_t\zeta_u^2}\Big[(Y_x\zeta_u-Y_u\zeta_x)^2~ F  + (Y_y\zeta_u-Y_u\zeta_y)^2 ~G + (Y_x\zeta_u-Y_u\zeta_x)(Y_y\zeta_u-Y_u\zeta_y)~H\Big]\\\nonumber
 H'&=&- \frac{2}{T_t\zeta_u^2}\Big[(X_x\zeta_u-X_u\zeta_x)(Y_x\zeta_u-Y_u\zeta_x)~ F -  (X_y\zeta_u-X_u\zeta_y)(Y_y\zeta_u-Y_u\zeta_y)~G
-(X_x\zeta_u\\\ \label{H}
&& \hspace*{1.3cm}-X_u\zeta_x)(Y_y\zeta_u-Y_u\zeta_y)~H\Big]\\\nonumber
K'&=&\frac{1}{T_t}\Big[\Big(-\zeta_{xx}+\frac{\zeta_{x}^2}{\zeta_u} -\frac{\zeta_{x}^2\zeta_{uu}}{\zeta_u^2}\Big)~F  + \Big(-\zeta_{yy}+\frac{\zeta_{y}^2}{\zeta_u} -\frac{\zeta_{y}^2\zeta_{uu}}{\zeta_u^2}\Big)~G+\Big(-\zeta_{xy}+\frac{\zeta_{x}\zeta_{y}}{\zeta_u} -\frac{\zeta_{x}\zeta_{y}\zeta_{uu}}{\zeta_u^2}\Big)~H\\\label{K}
&& \hspace*{1.3cm}+\frac{K}{\zeta_u}+\frac{\zeta_t}{\zeta \zeta_u }\Big]
\end{eqnarray}
where $\zeta={\partial u}=U-u_{t}T-u_{x}X-u_{y}Y$.
\end{theorem}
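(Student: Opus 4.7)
The plan is to apply the direct method for computing equivalence groupoids. Start with an invertible point transformation
$\xi\colon(t,x,y,u)\mapsto\bigl(T,X,Y,U\bigr)(t,x,y,u)$, prolong it to the second-order jet, substitute the resulting expressions into the transformed equation~\eqref{eveqnnew}, use the source equation to eliminate $u_t$, and interpret what remains as a polynomial identity in the remaining jet variables. Pulling back the contact condition $du'-u'_{t'}\,dt'-u'_{x'}\,dx'-u'_{y'}\,dy'=0$ along $\xi$ yields the linear system
\begin{equation*}
\begin{pmatrix} D_t T & D_t X & D_t Y \\ D_x T & D_x X & D_x Y \\ D_y T & D_y X & D_y Y \end{pmatrix}
\begin{pmatrix} u'_{t'} \\ u'_{x'} \\ u'_{y'} \end{pmatrix}
=
\begin{pmatrix} D_t U \\ D_x U \\ D_y U \end{pmatrix},
\end{equation*}
where $D_\alpha$ is the total derivative with respect to $\alpha\in\{t,x,y\}$; the coefficient matrix is invertible by the non-degeneracy of $\xi$. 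Applying the inverse operators $D_{x'},D_{y'}$ to $u'_{x'},u'_{y'}$ then expresses $u'_{x'x'},u'_{y'y'},u'_{x'y'}$ linearly in the six second-order jet coordinates $u_{tt},u_{tx},u_{ty},u_{xx},u_{xy},u_{yy}$.

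The crucial step is to enforce membership in the class. Substituting these prolongations into~\eqref{eveqnnew} and using the source equation to eliminate $u_t$ (higher $t$-derivatives cannot be removed without raising the order), I would collect the coefficients of $u_{tt},u_{tx},u_{ty}$. Their vanishing is required because they cannot be absorbed into the $u_{xx},u_{xy},u_{yy}$ block of the source equation. A careful algebraic analysis, using the invertibility of the Jacobian of $\xi$ together with the non-degeneracy $F'G'-(H')^2/4\neq 0$ assumed on the target, shows that these vanishing conditions force $T_x=T_y=T_u=0$, and hence $T=T(t)$ with $T_t\neq0$; the remaining non-degeneracy of $\xi$ then reduces to $\lvert\partial(X,Y,U)/\partial(x,y,u)\rvert\neq0$.

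Once $T=T(t)$, the formulas~\eqref{F}--\eqref{K} follow by reading off the coefficients of $u_{xx},u_{yy},u_{xy}$ and the inhomogeneous remainder in the simplified transformed equation. The recurring combinations $X_\alpha\zeta_u-X_u\zeta_\alpha$ and $Y_\alpha\zeta_u-Y_u\zeta_\alpha$ with $\alpha\in\{x,y\}$ arise naturally from the second prolongation, and $\zeta=U-u_tT-u_xX-u_yY$ encodes the pulled-back contact form. Sufficiency is then a verification that any $\xi$ of the form~\eqref{eqgrp} with $F',G',H',K'$ defined by~\eqref{F}--\eqref{K} maps~\eqref{eveqn} to~\eqref{eveqnnew}. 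The main obstacle is the algebraic bulk of the second prolongation and the ensuing coefficient matching; the conceptual content --- projectability of $T$ onto $t$-space forced by the absence of $u_{tt},u_{tx},u_{ty}$ in the class, and the role of $FG-H^2/4\neq0$ in ruling out degenerate sub-cases --- is standard, but producing the compact form~\eqref{F}--\eqref{K} requires a judicious choice of intermediate variables, precisely the role played by $\zeta$.
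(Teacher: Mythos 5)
Your proposal follows the same direct method as the paper's proof: prolong a general point transformation via the contact structure, force the coefficients of the second-order $t$-derivatives to vanish, use the nondegeneracy of the principal symbol $\left[\begin{smallmatrix}F & H/2\\ H/2 & G\end{smallmatrix}\right]$ together with the invertibility of the Jacobian to conclude $T_x=T_y=T_u=0$, and then read off $F'$, $G'$, $H'$, $K'$ by splitting with respect to the remaining second derivatives --- the only (immaterial) difference is that you substitute the prolonged map into the target equation and collect $u_{tt}$, $u_{tx}$, $u_{ty}$, whereas the paper substitutes into the source equation and kills the coefficients of $u'_{t't'}$, $u'_{t'x'}$, $u'_{t'y'}$. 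The deduction that $T$ depends only on $t$ is asserted rather than executed, but the mechanism you invoke is precisely the paper's rank argument, so this is essentially the same proof.
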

\begin{proof}
We employ the direct method for finding point transformations relating differential equations.
Suppose that a point transformation~$\xi$ of the general form
\begin{equation}\label{eqgrp_gen}
t'=T(t,x,y,u),\quad
x'=X(t,x,y,u),\quad
y'=Y(t,x,y,u),\quad
u'=U(t,x,y,u)
\end{equation}
with $J:=\big|\partial(T,X,Y,U)/\partial(t,x,y,u)\big|\neq0$ maps the equation~\eqref{eveqn} to the equation~\eqref{eveqnnew}.

Computing $u_x$, $u_y$ and $u_t$, we get
\begin{eqnarray}\label{umu}
u_\mu=-\frac{\partial_\mu u'}{\partial_u u'}=-\frac{U_\mu-u'_{t'}T_\mu-u'_{x'}X_\mu-u'_{y'}Y_\mu}{U_u-u'_{t'}T_u-u'_{x'}X_u-u'_{y'}Y_u}
\end{eqnarray}
where $\mu$ runs through $\{t,x,y\}$.
Now, making the change of variables \eqref{geeqgr}, we get  the derivatives $u_{xx}, u_{yy}, u_{xy}$ by the following
relations
\begin{eqnarray}\label{umunu}
u_{\mu\nu}=-\frac{(\partial_{\mu\nu} u')(\partial_u u')-(\partial_{u\nu}u')(\partial_\mu u')}{\partial_u u'}
\end{eqnarray}
where $\mu, \nu$ could be equal to $t, x$ and $y$.
Then, we insert $u_t, u_{xx}, u_{yy}, u_{xy}$ from \eqref{umu} and \eqref{umunu} into \eqref{eveqn},
and then in the transformed equation, the coefficients  $u'_{t'x'},$ $u'_{t'y'}$ and $u'_{t't'}$ are set equal to zero
because by comparing the transformed equation with equation \eqref{eveqnnew},
the right-hand side of the  transformed equation must not contain $u'_{t'x'},$ $u'_{t'y'}$ and $u'_{t't'}$.
Therefore
\begin{eqnarray}
&&\hspace*{-9ex}F(T_x {\partial_u u'}-T_u {\partial_x u'})^2+G(T_y {\partial_u u'} - T_u {\partial_y u'})^2+H(T_x {\partial_u u'}-T_u {\partial_x u'})(T_y {\partial_u u'} - T_u {\partial_y u'})=0,
\\[5mm]
&&\hspace*{-6ex}\begin{split}
&F(T_x {\partial_u u'}-T_u {\partial_x u'})(X_x {\partial_u u'}-X_u {\partial_x u'})+G(T_y {\partial_u u'} - T_u {\partial_y u'})(X_y {\partial_u u'} - X_u {\partial_y u'})\\
&\qquad+H(T_x {\partial_u u'}-T_x {\partial_x u'})(X_y {\partial_u u'} - X_u {\partial_y u'})=0,
\end{split}
\\[5mm]
&&\hspace*{-6ex}\begin{split}
&F(T_x {\partial_u u'}-T_u {\partial_x u'})(Y_x {\partial_u u'}-Y_u {\partial_x u'})+G(T_y {\partial_u u'} - T_u {\partial_y u'})(Y_y {\partial_u u'} - Y_u {\partial_y u'})\\
&\qquad+H(T_x {\partial_u u'}-T_x {\partial_x u'})(Y_y {\partial_u u'} - Y_u {\partial_y u'})=0,
\end{split}
\end{eqnarray}
According to the above system can be written
\begin{gather*}
\begin{split}
&{\rm rank}\left[\begin{array}{ccccc}
    T_x {\partial_u u'}-T_u {\partial_x u'} & T_y {\partial_u u'} - T_u {\partial_y u'} &  T_y {\partial_u u'} - T_u {\partial_y u'}\\
   X_x {\partial_u u'}-X_u {\partial_x u'} & X_y {\partial_u u'} - X_u {\partial_y u'} & X_y {\partial_u u'} - X_u {\partial_y u'}\\
   Y_x {\partial_u u'}-Y_u {\partial_x u'} & Y_y {\partial_u u'} - Y_u {\partial_y u'} & Y_y {\partial_u u'} - Y_u {\partial_y u'}
      \end{array}\right]
\\[1ex]
&\qquad  {}= {\rm rank}\left[\begin{array}{ccccc}
    T_x  & T_y &   T_u \\
   X_x  & X_y  & X_u \\
    Y_x & Y_y  & Y_u \\
    \partial_{x}u' & \partial_{y}u' & \partial_{u}u'
      \end{array}\right]-1
     ={\rm rank}\left[\begin{array}{ccccc}
    T_x  & T_y &   T_u \\
   X_x  & X_y  & X_u \\
    Y_x & Y_y  & Y_u \\
    U_x & U_y & U_u
      \end{array}\right]-1
   =3,
\end{split}
\end{gather*}
since $J\neq0$. Now since  the matrix
$\left[\begin{array}{cc}
    F & H/2 \\
    H/2 & G \\
  \end{array}\right]$
is nondegenerate then we conclude
\begin{eqnarray}\label{eqfi}
\left\{\begin{array}{cccccc}
    T_x {\partial_u u'}-T_u {\partial_x u'} =0,\\[3mm]
     T_y {\partial_u u'} - T_u {\partial_y u'}=0, \\[3mm]
      T_y {\partial_u u'} - T_u {\partial_y u'}=0.
      \end{array}\right.
  \end{eqnarray}
Splitting the variables  in \eqref{eqfi} we find
$T_\mu U_u-T_u U_\mu=0$ and
\begin{eqnarray}
    T_\mu T_u =0,\qquad
     T_\mu X_u - T_u X_\mu=0, \qquad
      T_\mu Y_u - T_u Y_\mu=0.
  \end{eqnarray}
where $\mu=x, y$.  If $T_u \ne0$ then each row
$(X_x, X_y, X_u)$, $(Y_x, Y_y, Y_u)$ and $(U_x, U_y, U_u)$  are parallel to the $(T_x, T_y, T_u)$
which contradicts with $J\neq0$, and therefore $T_u=0$. Using $J\neq0$ condition we conclude
$(X_u, Y_u, U_u)\neq(0, 0, 0)$ and $T_\mu U_u=T_\mu X_u=T_\mu Y_u=0$ here $\mu$ are $x, y$.
So $T_x=T_y=0$.

Putting $\zeta={\partial u}=U-u_{t}T-u_{x}X-u_{y}Y$, and using the following formula
\begin{eqnarray}
u_t=-\dfrac{\partial_t u'}{\partial_u u'}=-\dfrac{\zeta_t}{\zeta_u}+\dfrac{T_t}{\zeta_u}u'_{t'}
\end{eqnarray}
we have
\begin{eqnarray*}
&&F\Big[-\frac{\zeta_{xx}}{\zeta_u}+\frac{\zeta_{x}^2}{\zeta_u^2} -\frac{\zeta_{x}^2\zeta_{uu}}{\zeta_u^3}-\frac{1}{\zeta_u^3}\big [(X_x\zeta_u-X_u\zeta_x)^2u'_{x'x'}
+(Y_x\zeta_u-Y_u\zeta_x)^2u'_{y'y'}+2(X_x\zeta_u-X_u\zeta_x)\\
&&~~(Y_x\zeta_u-Y_u\zeta_x)u'_{x'y'}\big]\Big] + G\Big[-\frac{\zeta_{yy}}{\zeta_u}+\frac{\zeta_{y}^2}{\zeta_u^2} -\frac{\zeta_{y}^2\zeta_{uu}}{\zeta_u^3}   -\frac{1}{\zeta_u^3}\big [(X_y\zeta_u-X_u\zeta_y)^2u'_{x'x'}+(Y_y\zeta_u-Y_u\zeta_y)^2u'_{y'y'}\\
&&~~+2(X_y\zeta_u-X_u\zeta_y)(Y_y\zeta_u-Y_u\zeta_y)u'_{x'y'}\big]\Big]+H\Big[-\frac{\zeta_{xy}}{\zeta_u}+\frac{\zeta_{x}\zeta_{y}}{\zeta_u^2} -\frac{\zeta_{x}\zeta_{y}\zeta_{uu}}{\zeta_u^3}   -\frac{1}{\zeta_u^3}\big [(X_x\zeta_u-X_u\zeta_x)\\
&&~~ (X_y\zeta_u-X_u\zeta_y)u'_{x'x'}+(Y_x\zeta_u-Y_u\zeta_x)(Y_y\zeta_u-Y_u\zeta_y)u'_{y'y'}+2(X_x\zeta_u-X_u\zeta_x)(Y_y\zeta_u-Y_u\zeta_y)u'_{x'y'}\big]\Big]\\
&&~~ +K=-\frac{\zeta_t}{\zeta}+\frac{T_t}{\zeta_u}\big(F'u'_{x'x'}+G'u'_{y'y'}+H'u'_{x'y'}+ K'\big)
\end{eqnarray*}
and then, splitting the last equation with respect to~$u'_{x'x'}$, $u'_{x'y'}$ and $u'_{y'y'}$, we obtain the equations~\eqref{F}, \eqref{G}, \eqref{H} and \eqref{K}.
\end{proof}

In other words, Theorem~\ref{pointequiv} means
that the equivalence groupoid~$\mathcal G^\sim$ of the class ${\mathcal L}$ consists of
of the triples $(\theta,\xi,\tilde\theta)$, where the point transformations~$\xi$ is of the form~\eqref{eqgrp},
the tuples~$\theta$ runs through the entire set~$\mathcal S$, and
the tuples~$\tilde\theta$ are defined by the equations~\eqref{F}--\eqref{K}.
Thus, the equivalence groupoid~$\mathcal G^\sim$ can be partitioned into families of admissible transformations
each of which corresponds to a fixed point transformations~$\xi$ and parameterized by arbitrary values of~$\theta\in\mathcal S$.
This means that the each point transformation of the form~\eqref{eqgrp} is the projection of an equivalence transformation
to the space with coordinates $(t,x,y,u)$, the $(F,G,H,K)$-components of this equivalence transformation
are defined by the equations~\eqref{F}, \eqref{G}, \eqref{H} and \eqref{K}, respectively, and
the action groupoid of the equivalence group~$G^\sim$ of the class ${\mathcal L}$
coincides with the entire equivalence groupoid~$\mathcal G^\sim$ of this class.
As a result, we have the following theorem.

\begin{theorem}
The class${\mathcal L}$ of equations of the form~\eqref{eveqn} with $FG-H^2/4\neq 0$ is normalized in the usual sense.
Its usual equivalence group~$G^\sim$ is constituted by the point transformations in the space with coordinates $(t,x,y,u,u_t,u_x,u_y,F,G,H,K)$,
where
the $(t,x,y,u)$-components are given by~\eqref{eqgrp},
the $(u_t,u_x,u_y)$-components are computed using the chain rule, and
the $(F,G,H,K)$-components are defined by the equations~\eqref{F}, \eqref{G}, \eqref{H} and \eqref{K}, respectively.
\end{theorem}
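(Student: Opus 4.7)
The plan is to leverage Theorem~\ref{pointequiv} directly: it has already identified, for every admissible triple $(\theta,\xi,\tilde\theta)\in\mathcal G^\sim$, both the shape of $\xi$ (the form~\eqref{eqgrp}) and the unique transformation law~\eqref{F}--\eqref{K} from $\theta$ to $\tilde\theta$. The remaining task is therefore to package these data into a single point transformation $\Xi$ on the extended space with coordinates $(t,x,y,u,u_t,u_x,u_y,F,G,H,K)$, and to verify $\Xi\in G^\sim$ in the sense of Definition~\ref{def:UsualEquivGroup}.

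First I would define $\Xi$ block by block: the $(t,x,y,u)$-components from~\eqref{eqgrp}; the $(u_t,u_x,u_y)$-components by the first-order contact prolongation of $\xi$ (well defined thanks to $T_t\ne0$ and the Jacobian condition $J\ne0$ carried over from the proof of Theorem~\ref{pointequiv}); and the $(F,G,H,K)$-components by the formulas~\eqref{F}--\eqref{K}. The latter are algebraic in $(F,G,H,K)$ with coefficients built solely from $\xi$, so they genuinely furnish a point transformation in the fibre coordinates. By construction, $\Xi$ projects onto $\xi$, is compatible with the contact structure on the jet block, and satisfies the projectability requirement of Definition~\ref{def:UsualEquivGroup} for every $0\le r'\le r$.

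The one substantive check is that $\Xi$ maps the solution set $\mathcal S=\{FG-H^2/4\ne0\}$ into itself. Here I would rewrite~\eqref{F}, \eqref{G}, \eqref{H} as the congruence
\begin{equation*}
\begin{pmatrix} F' & H'/2 \\ H'/2 & G' \end{pmatrix}
= -\frac{1}{T_t\,\zeta_u^{\,2}}\; A^{\top}
\begin{pmatrix} F & H/2 \\ H/2 & G \end{pmatrix} A,
\qquad
A=\begin{pmatrix} X_x\zeta_u-X_u\zeta_x & Y_x\zeta_u-Y_u\zeta_x \\ X_y\zeta_u-X_u\zeta_y & Y_y\zeta_u-Y_u\zeta_y \end{pmatrix},
\end{equation*}
so that $F'G'-(H')^2/4=(\det A)^2(T_t\zeta_u^{\,2})^{-2}(FG-H^2/4)$. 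A short expansion using $T_x=T_y=T_u=0$ (established in the proof of Theorem~\ref{pointequiv}) identifies $\det A$ with a nonzero multiple of $\zeta_u\,|\partial(X,Y,U)/\partial(x,y,u)|$, which is nonzero by $J\ne0$; hence nondegeneracy is preserved and $\Xi$ indeed preserves $\mathcal L$.

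With $\Xi\in G^\sim$ thus established, we have $\Xi|_{(t,x,y,u)}=\xi$ and $\Xi_*\theta=\tilde\theta$ by construction, so every admissible triple is induced by an element of $G^\sim$, and $\mathcal L$ is normalized in the usual sense. The converse direction, that every such $\Xi$ maps~\eqref{eveqn} to~\eqref{eveqnnew}, is just a restatement of the direct computation already carried out for Theorem~\ref{pointequiv}. The main obstacle I foresee is the bookkeeping in the congruence identity, namely tracking signs and relating $\det A$ to $J$ cleanly; if this becomes unwieldy, an alternative is to recognise $A$ as (up to the scalar factor $\zeta_u$ and the contact corrections) the restriction of the differential of $\xi$ to the spatial-and-$u$ block, whose nonsingularity is built into the hypothesis $J\ne0$.
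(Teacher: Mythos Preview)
Your proposal is correct and follows essentially the same route as the paper: the paper likewise derives the theorem directly from Theorem~\ref{pointequiv}, observing that for every~$\xi$ of the form~\eqref{eqgrp} the formulas~\eqref{F}--\eqref{K} define a single map on arbitrary elements valid for all $\theta\in\mathcal S$, so that the action groupoid of~$G^\sim$ coincides with~$\mathcal G^\sim$. Your explicit congruence argument for the preservation of $FG-H^2/4\ne0$ is a genuine addition---the paper passes over this point in silence---and your identification of $\det A$ with $\zeta_u$ times the Jacobian $|\partial(X,Y,U)/\partial(x,y,u)|$ (via the row-reduction $\zeta_\mu=U_\mu-u_xX_\mu-u_yY_\mu$) is correct and cleanly closes that gap.
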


\begin{corollary}
A smooth vector field $Q$ which defines an infinitesimal point symmetry of equation \eqref{eveqn}, has the form
\begin{eqnarray}\label{vecinf}
Q=a(t)\gen t +b(t, x, y, u)\gen x+c(t, x, y, u)\gen y +\varphi(t, x, y, u)\gen u.
\end{eqnarray}
where $a(t),\; b(t, x, y, u),\; c(t, x, y, u),\; \varphi(t, x, y, u)$ are smooth functions of their arguments.
\end{corollary}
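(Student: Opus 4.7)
The plan is to reduce the statement to the already-established Theorem \ref{pointequiv} by passing from the Lie group of point symmetries to its Lie algebra at the identity. An infinitesimal point symmetry $Q$ of the equation $\mathcal L_\theta$ is, by definition, the generator of a one-parameter local Lie group $\{\xi_\epsilon\}$ of point transformations on $(t,x,y,u)$-space each of which maps $\mathcal L_\theta$ into itself, and hence into the class $\mathcal L$. Because the class $\mathcal L$ is normalized (by the theorem immediately preceding this corollary), every such $\xi_\epsilon$ falls under the hypothesis of Theorem \ref{pointequiv}.

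Applying Theorem \ref{pointequiv} to each $\xi_\epsilon$ yields the canonical form \eqref{eqgrp}, namely
\[
t'=T_\epsilon(t),\quad x'=X_\epsilon(t,x,y,u),\quad y'=Y_\epsilon(t,x,y,u),\quad u'=U_\epsilon(t,x,y,u),
\]
with $T_\epsilon$ depending on $t$ alone and with the identity recovered at $\epsilon=0$. Differentiating this family at $\epsilon=0$ and setting
\[
a(t):=\tfrac{\partial T_\epsilon(t)}{\partial \epsilon}\Big|_{\epsilon=0},\ \ b:=\tfrac{\partial X_\epsilon}{\partial \epsilon}\Big|_{\epsilon=0},\ \ c:=\tfrac{\partial Y_\epsilon}{\partial \epsilon}\Big|_{\epsilon=0},\ \ \varphi:=\tfrac{\partial U_\epsilon}{\partial \epsilon}\Big|_{\epsilon=0},
\]
one recovers the vector field \eqref{vecinf}. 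The coefficient of $\partial_t$ depends only on $t$ because its parent function $T_\epsilon$ does, while $b$, $c$ and $\varphi$ inherit smooth dependence on $(t,x,y,u)$.

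The only delicate point, and hence the main (though minor) obstacle, is justifying the initial reduction: one must argue that the finite transformations $\xi_\epsilon$ are indeed admissible transformations of the class $\mathcal L$ to which Theorem \ref{pointequiv} applies. This is immediate from normalization, since a symmetry of $\mathcal L_\theta$ is an admissible transformation from $\mathcal L_\theta$ to itself within $\mathcal L$, so its $(t,x,y,u)$-components must already satisfy \eqref{eqgrp}. The rest of the argument is purely the formal operation of differentiating the defining relations of the group at $\epsilon=0$, with no further computation required.
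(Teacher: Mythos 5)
Your proof is correct and follows the same route the paper intends: the corollary is stated as an immediate consequence of Theorem~\ref{pointequiv}, obtained by applying that theorem to each member of the one-parameter group generated by $Q$ and differentiating at $\epsilon=0$, so that $a=\partial_\epsilon T_\epsilon|_{\epsilon=0}$ inherits the dependence on $t$ alone. The appeal to normalization is superfluous (a symmetry of $\mathcal L_\theta$ is by definition a point transformation mapping an equation of the class to an equation of the class, namely itself, so Theorem~\ref{pointequiv} applies directly), but this does not affect the validity of the argument.
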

%
%
\begin{theorem}\label{canonicalform}
A vector field
\[ Q=a(t)\gen t + b(t, x, y, u)\gen x+ c(t, x, y, u)\gen y + \varphi(t, x, y, u)\gen u \]
can be transformed by transformations of the form (\ref{eqgrp}) into one of the following canonical forms:
\begin{equation}
Q=\gen t,\qquad Q=\gen u.
\end{equation}
\end{theorem}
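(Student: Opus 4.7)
The strategy is the standard straightening (flow box) of a vector field, adapted to respect the restricted form \eqref{eqgrp} of the allowed change of variables. Under $\xi:(t,x,y,u)\mapsto(T(t),X,Y,U)$ the pushforward of $Q$ is
\[
\xi_*Q=Q(T)\gen{t'}+Q(X)\gen{x'}+Q(Y)\gen{y'}+Q(U)\gen{u'}.
\]
Because the admissible transformations have $T=T(t)$, the first component simplifies to $Q(T)=a(t)T_t$. Whether one can solve $Q(T)=1$ by a function of $t$ alone therefore depends only on whether $a\ne 0$, and this dichotomy dictates the two canonical forms.

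\emph{Case 1: $a(t)\ne 0$.} I aim to send $Q$ to $\gen{t'}$, which amounts to the system $Q(T)=1$, $Q(X)=Q(Y)=Q(U)=0$. Set $T(t)=\int\!dt/a(t)$; then $T_t=1/a\ne 0$ and the first equation holds. The remaining three equations demand that $X,Y,U$ be first integrals of $Q$. Since $a\ne 0$, $Q$ is nonvanishing, so the flow box theorem supplies, in a neighborhood of any point, three functionally independent first integrals $X,Y,U$ smooth in $(t,x,y,u)$. As $Q(T)=1$ while $Q(X)=Q(Y)=Q(U)=0$, the one-forms $dT,dX,dY,dU$ are linearly independent, which is precisely the Jacobian nondegeneracy in \eqref{eqgrp}.

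\emph{Case 2: $a(t)\equiv 0$.} Now $Q=b\gen x+c\gen y+\varphi\gen u$ with $(b,c,\varphi)\ne(0,0,0)$, and I aim for $Q=\gen{u'}$, i.e.\ $Q(T)=Q(X)=Q(Y)=0$ and $Q(U)=1$. The choice $T(t)=t$ makes the first equation automatic because $Q$ has no $\gen t$-component. Viewing $t$ as a parameter, $Q$ restricts on each slice $t=\mathrm{const}$ to a nonvanishing vector field on the three-manifold with coordinates $(x,y,u)$. The parameter-dependent flow box theorem then yields smooth functions $X,Y,U$ of $(t,x,y,u)$ with $Q(X)=Q(Y)=0$, $Q(U)=1$, and $(X,Y,U)$ functionally independent in $(x,y,u)$; together with $T_t=1\ne 0$ this delivers the required Jacobian condition.

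The main (and only) technical point is to confirm functional independence of the four coordinate components, rather than merely existence of the first integrals. In both cases this follows from the observation that if $\alpha\,dT+\beta\,dX+\gamma\,dY+\delta\,dU=0$, then applying $Q$ forces the coefficient of the longitudinal direction (the one on which $Q$ is nonzero) to vanish, after which the remaining differentials are independent by the flow box construction. All arguments are local, as is standard for a canonical form reduction.
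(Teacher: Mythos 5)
Your proposal is correct and follows essentially the same route as the paper: split on $a(t)\neq 0$ versus $a(t)=0$, solve $Q(T)=a\dot T=1$ with first integrals $X,Y,U$ in the first case, and straighten $Q$ to $\gen{u'}$ on the $t$-slices in the second. Your extra care in verifying the functional-independence/Jacobian condition of \eqref{eqgrp} is a welcome addition, but the underlying argument is the one the paper gives.
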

\begin{proof}
Any operator
\[ Q=a(t)\gen t + b(t, x, y, u)\gen x+ c(t, x, y, u)\gen y + \varphi(t, x, y, u)\gen u, \]
is transformed by our allowed transformations into $Q'=Q(T)\gen {t'} + Q(X)\gen {x'} + Q(Y)\gen {y'} + Q(U)\gen {u'}$. If $a(t)\neq 0$ then we
choose $T(t)$ so that $Q(T)=a(t)\dot{T}(t)=1$ (at least locally) and also we can choose $X, ~Y$ and $U$ to be each of any three
independent integrals of the PDE $Q(V)=0$. This gives the canonical form $Q=\gen t$ in some coordinate
system.
If we now have $a(t)=0$ then $Q$ is transformed into $Q'=Q(X)\gen {x'} + Q(Y)\gen {y'} + Q(U)\gen {u'}$. We then choose $X, ~Y$ and $U$ so
that $ Q(X)=Q(Y)=0,\;\; Q(U)=1$. This gives us the canonical form $Q=\gen u$ in some coordinate system.
\end{proof}
\begin{remark}
It is sometimes useful to use the fact that $t'=t,\;\, x'=y'=u,\;\; u'=x=y$ is an equivalence
transformation preserving the form of equation (\ref{eveqn}), and it gives us the following transformations of
derivatives:
\[ \gen t\leftrightarrow \gen t, \quad \gen x \leftrightarrow \gen y, \quad \gen x \leftrightarrow \gen u, \quad \gen y \leftrightarrow \gen u. \]
\end{remark}
%
%
\section{The symmetry condition}
The second prolongation of $Q$ in \eqref{vecinf}
is the vector field
\begin{gather}\label{pr2}
\begin{split}
{\pr{2}}Q&{}=Q+\varphi^x {\partial_{u_x}}+\varphi^y {\partial_{u_y}}+\varphi^t {\partial_{u_t}}+
\varphi^{xx}{\partial_{u_{xx}}}+\varphi^{xy}{\partial_{u_{xy}}}
+\varphi^{xt}{\partial_{u_{xt}}}+\varphi^{yy}{\partial_{u_{yy}}}\\
&{}~~+\varphi^{yt}{\partial_{u_{yt}}}+\varphi^{tt}{\partial_{u_{tt}}},
\end{split}
\end{gather}
with coefficients
\begin{eqnarray}
\varphi^{i}&=&D_{i}(\varphi- b u_x-c u_y-a u_t)+b u_{xi}+c u_{yi}+a u_{ti},\\\label{eq:8}
\varphi^{ij}&=&D_{i}(D_{j}(\varphi -b u_x-c u_y-a u_t)) +b u_{xij}+c u_{yij}+au_{tij},\label{eq:9}
\end{eqnarray}
where  $D_i$
represents total derivative and subscripts of $u$ are derivative
in terms of the respective coordinates $i$ and $j$
in above could be $x,y$ or $t$ coordinates.  Then the vector field $Q$
is a symmetry of \eqref{eveqn} precisely when its second-order prolongation \eqref{pr2} annihilates equation
\eqref{eveqn} on its solution manifold, that is we have
\begin{equation}\label{symcon1}
{\pr{2}}Q(\Delta)\Bigl|_{\Delta=0}=0,\quad \Delta=u_t-Fu_{xx}-Gu_{yy}-Hu_{xy}-K,
\end{equation}
namely
\begin{eqnarray}\nonumber\label{symcon2}
 &\Big[\varphi_t-[Q(F)+\varphi_x F_{u_x}+\varphi_y F_{u_y}]u_{xx}-[Q(G)+\varphi_x G_{u_x}+\varphi_y G_{u_y}]u_{yy}-[Q(H)+\varphi_x H_{u_x}\\
 & +\varphi_y H_{u_y}]u_{xy}-[Q(K)+\varphi_x K_{u_x}+\varphi_y K_{u_y}]
 -\varphi_{xx}F-\varphi_{yy}G-\varphi_{xy}H\Big|_{\Delta=0}=0.
\end{eqnarray}
\begin{proposition}\label{symmcondition}
The symmetry group of the nonlinear equation \eqref{eveqn} for
 functions $F,~G,~H$ and $K$ is generated by the vector field
\begin{equation}\label{gvf}
Q=a(t)\gen t + b(t, x, y, u)\gen x+ c(t, x, y, u)\gen y + \varphi(t , x, y, u)\gen u
\end{equation}
where the functions $a$, $b$, $c$ and $\varphi$ satisfy the determining equations
\begin{subequations}\label{determin}
\begin{align} \nonumber
& aF_t+bF_x+cF_y+\varphi F_u+[\varphi_x+(\varphi_u-b_x)u_x-c_xu_y-b_uu_x^2-c_uu_xu_y]F_{u_x}\\ \nonumber
& +[\varphi_y+(\varphi_u-c_y)u_y-b_yu_x-c_uu_y^2-b_uu_xu_y]F_{u_y}+[a_t-2b_x-2b_uu_x]F\\\label{sym1}
& -[b_y+b_uu_y]H=0,
\end{align}
\begin{align} \nonumber
& aG_t+bG_x+cG_y+\varphi G_u+[\varphi_x+(\varphi_u-b_x)u_x-c_xu_y-b_uu_x^2-c_uu_xu_y]G_{u_x}\\ \nonumber
& +[\varphi_y+(\varphi_u-c_y)u_y-b_yu_x-c_uu_y^2-b_uu_xu_y]G_{u_y}+[a_t-2c_y-2c_uu_y]G\\\label{sym2}
& -[c_x+c_uu_x]H=0,
\end{align}
\begin{align} \nonumber
& aH_t+bH_x+cH_y+\varphi H_u+[\varphi_x+(\varphi_u-b_x)u_x-c_xu_y-b_uu_x^2-c_uu_xu_y]H_{u_x}\\
\nonumber
& +[\varphi_y+(\varphi_u-c_y)u_y-b_yu_x-c_uu_y^2-b_uu_xu_y]H_{u_y}+[a_t-b_x-c_y-c_uu_y-b_uu_x]H\\
\nonumber
&-2(c_x+c_uu_x)F-2(b_y+b_uu_y)G=0,\\\label{sym4}
\end{align}
\begin{align} \nonumber
& aK_t+bK_x+cK_y+\varphi K_u+[\varphi_x+(\varphi_u-b_x)u_x-c_xu_y-b_uu_x^2-c_uu_xu_y]K_{u_x}\\
\nonumber
& +[\varphi_y+(\varphi_u-c_y)u_y-b_yu_x-c_uu_y^2-b_uu_xu_y]K_{u_y}+[a_t-\varphi_u+b_uu_x+c_uu_y]K\\
\nonumber
& +[\varphi_{xy}+(\varphi_{yu}-b_{xy})u_x+(\varphi_{xu}-c_{xy})u_y+(\varphi_{uu}-b_{xu}-c_{yu})u_xu_y-b_{yu}u_x^2\\
\nonumber
&-b_{uu}u_x^2u_y-c_{xu}u_y^2-c_{uu}u_xu_y^2]H+[\varphi_{yy}+(2\varphi_{yu}-c_{yy})u_y+(\varphi_{uu}-2c_{yu})u_y^2\\
\nonumber
& -c_{uu}u_y^3-b_{yy}u_x-2b_{yu}u_xu_y-b_{uu}u_xu_y^2]G+[\varphi_{xx}+(2\varphi_{xu}-b_{xx})u_x-c_{xx}u_y\\
\nonumber
&+(\varphi_{uu}-2b_{xu})u_x^2-b_{uu}u_x^3-2c_{xu}u_xu_y-c_{uu}u_x^2u_y]F-\varphi_t+b_t u_x+c_t u_y=0.\\\label{sym4}
\end{align}
\end{subequations}
\end{proposition}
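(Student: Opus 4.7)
The plan is to apply the infinitesimal invariance criterion \eqref{symcon1} to $\Delta = u_t - Fu_{xx} - Gu_{yy} - Hu_{xy} - K$. Since the corollary following Theorem~\ref{pointequiv} already restricts $Q$ to the form \eqref{vecinf} with $a$ depending only on $t$ (so that $a_x=a_y=a_u=0$), the first and second prolongation coefficients \eqref{eq:8}--\eqref{eq:9} simplify considerably. After computing them I will substitute into the expanded symmetry condition \eqref{symcon2}, restrict to the solution manifold, and split with respect to the free jet variables.

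For the first-order coefficients, with characteristic $Q_0 = \varphi - au_t - bu_x - cu_y$, the standard identity $\varphi^i = D_iQ_0 + au_{ti} + bu_{xi} + cu_{yi}$ collapses cleanly: the correction terms cancel the second-derivative contributions produced by $D_iQ_0$, leaving expressions polynomial in the first derivatives. One then reads off directly that the bracketed quantities multiplying $F_{u_x}$, $F_{u_y}$, $G_{u_x}$, $G_{u_y}$, $H_{u_x}$, $H_{u_y}$, $K_{u_x}$, $K_{u_y}$ in \eqref{determin} are exactly $\varphi^x$ and $\varphi^y$, while
\[
\varphi^t = \varphi_t + (\varphi_u - a_t)u_t - b_tu_x - c_tu_y - (b_uu_x + c_uu_y)u_t.
\]
The second-order coefficients $\varphi^{xx}$, $\varphi^{xy}$, $\varphi^{yy}$ are obtained analogously by iterating total derivatives on $Q_0$; each consists of a linear piece in $u_{xx}$, $u_{xy}$, $u_{yy}$ together with a polynomial in $u_x$, $u_y$ whose coefficients involve the second partial derivatives of $b$, $c$, $\varphi$.

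Substituting everything into \eqref{symcon2} and using $\Delta = 0$ to replace the remaining occurrences of $u_t$ (which appear only inside $\varphi^t$, since the $au_{tJ}$ terms have already cancelled the $u_t$-dependence of $\varphi^{xx}$, $\varphi^{xy}$, $\varphi^{yy}$) by $Fu_{xx} + Gu_{yy} + Hu_{xy} + K$ produces a polynomial in the free jet variables $u_x, u_y, u_{xx}, u_{xy}, u_{yy}$ that is linear in the three second derivatives. Splitting with respect to $u_{xx}$ gathers the contributions from $-[Q(F)+\varphi^xF_{u_x}+\varphi^yF_{u_y}]$, from the $u_{xx}$-parts of $-F\varphi^{xx}$, $-G\varphi^{yy}$, $-H\varphi^{xy}$, and from the $F$-factor generated inside $\varphi^t$ by the $u_t$-substitution; these recombine into \eqref{sym1}. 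The analogous splits for $u_{yy}$ and for $u_{xy}$ yield \eqref{sym2} and the $H$-equation, while the part independent of the second derivatives yields the $K$-equation. The main obstacle is purely combinatorial: the coefficients of $u_{xx}$, $u_{xy}$, $u_{yy}$ inside $\varphi^{xx}$, $\varphi^{xy}$, $\varphi^{yy}$ each arise from several different differentiations of $b$, $c$, $\varphi$, and further contributions come from the $u_t$-substitution in $\varphi^t$; the bookkeeping must be performed carefully so that all such contributions reassemble into the compact forms $[a_t - 2b_x - 2b_uu_x]F - [b_y + b_uu_y]H$ and its analogues displayed in \eqref{determin}.
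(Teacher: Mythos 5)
Your proposal is correct and follows essentially the same route as the paper: the paper's Section 3 sets up exactly this computation (second prolongation \eqref{pr2} with coefficients \eqref{eq:8}--\eqref{eq:9}, the invariance condition \eqref{symcon1}--\eqref{symcon2}, elimination of $u_t$ on the solution manifold, and splitting with respect to $u_{xx}$, $u_{xy}$, $u_{yy}$), and the Proposition is stated as the direct outcome of that calculation. Your identification of the brackets as $\varphi^x$, $\varphi^y$, your formula for $\varphi^t$, and your accounting of where the $[a_t-2b_x-2b_uu_x]F-[b_y+b_uu_y]H$ terms come from (the $u_t$-substitution in $\varphi^t$ plus the second-derivative parts of $\varphi^{xx}$, $\varphi^{xy}$, $\varphi^{yy}$) all check out against \eqref{determin}.
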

%
%
\section{Abelian Lie algebras as symmetries}
\begin{theorem}\label{abeliansymm1}
There are only two inequivalent forms for one-dimensional
abelian Lie symmetry algebras  $A=\langle \gen t\rangle$ and $A=\langle \gen u\rangle$.
\end{theorem}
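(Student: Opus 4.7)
The plan is to deduce this statement as an immediate consequence of Theorem~\ref{canonicalform}, supplemented by a short argument showing that the two resulting canonical forms are genuinely inequivalent under the equivalence group~\eqref{eqgrp}. Since any one-dimensional Lie algebra is automatically abelian, the abelianness hypothesis imposes no extra constraint on a single generator, and the classification reduces to finding the inequivalent canonical forms of one non-zero vector field $Q$ of the form~\eqref{vecinf}.

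First I would invoke Theorem~\ref{canonicalform}: any such $Q$ is equivalent, under a transformation from~\eqref{eqgrp}, to $\gen t$ whenever the coefficient $a(t)$ is not identically zero, and to $\gen u$ whenever $a(t)\equiv 0$. This produces the two candidate canonical forms, so what remains is to verify that no equivalence transformation of the form~\eqref{eqgrp} can carry the first into the second (or conversely).

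To establish this inequivalence, I would appeal to the explicit action of~\eqref{eqgrp} on a vector field: $Q=a(t)\gen t+b\gen x+c\gen y+\varphi\gen u$ is pushed forward to $Q'=Q(T)\gen{t'}+Q(X)\gen{x'}+Q(Y)\gen{y'}+Q(U)\gen{u'}$. Crucially, Theorem~\ref{pointequiv} forces $T=T(t)$ only, so $Q(T)=a(t)T_t$. For $Q=\gen t$ we have $Q(T)=T_t\neq 0$ by the nondegeneracy condition $T_t\ne0$, while for $Q=\gen u$ we have $Q(T)=0$. Since $T_t\ne 0$, the vanishing of $Q(T)$ is equivalent to the vanishing of the $\gen t$-coefficient of $Q$, and this dichotomy is therefore an invariant of the equivalence group action. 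Hence $\gen t$ and $\gen u$ lie in distinct equivalence classes.

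The main (and essentially only) non-routine point is that the restriction $T=T(t)$ rigidly separates the $\gen t$-direction from the fiber coordinates $(x,y,u)$, preventing any mixing of the coefficient $a(t)$ with the remaining components; this has already been carried out in the proof of Theorem~\ref{pointequiv}. Consequently the present statement is a direct corollary of the two preceding theorems and the invariant argument above, and I anticipate no serious obstacle beyond faithfully quoting these facts.
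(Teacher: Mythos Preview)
Your proposal is correct and matches the paper's approach: the paper states this theorem without proof, treating it as an immediate consequence of Theorem~\ref{canonicalform}. Your additional argument for the genuine inequivalence of $\gen t$ and $\gen u$ (via the invariance of the vanishing of the $\gen t$-coefficient under~\eqref{eqgrp}) is a welcome clarification that the paper omits.
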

\begin{theorem}\label{abeliansymm2}
All inequivalent, admissible two-dimensional abelian Lie symmetry algebras  are
$\langle \gen t, \gen u \rangle$, $\langle \gen x, \gen u\rangle$ and $\langle \gen u, x\gen u\rangle$.
\end{theorem}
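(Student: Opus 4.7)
The plan is to apply Theorem \ref{canonicalform} to place the first generator $Q_1$ of the two-dimensional abelian algebra into one of the canonical forms $\partial_t$ or $\partial_u$, and then use the residual equivalence group (the subgroup of \eqref{eqgrp} that fixes $Q_1$) together with the commutation relation $[Q_1, Q_2]=0$ to classify $Q_2$. Modulo the Lie-algebra change of basis $Q_2\mapsto Q_2-\lambda Q_1$, this should produce exactly the three forms in the statement.

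\emph{Case} $Q_1=\partial_t$. For a general $Q_2=a(t)\partial_t+b\partial_x+c\partial_y+\varphi\partial_u$ in the form \eqref{vecinf}, the relation $[\partial_t,Q_2]=0$ forces $a'=b_t=c_t=\varphi_t=0$, so $a$ is a constant which can be absorbed into $Q_1$. The stabilizer of $\partial_t$ in \eqref{eqgrp} consists of transformations with $T=t+\mathrm{const}$ and $X,Y,U$ independent of $t$; these act as an arbitrary local diffeomorphism on $(x,y,u)$. Choosing $U$ with $Q_2(U)=1$ and $X,Y$ functionally independent first integrals of $Q_2$ straightens $Q_2$ to $\partial_u$, yielding $\langle\partial_t,\partial_u\rangle$.

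\emph{Case} $Q_1=\partial_u$. The commutator condition forces every coefficient of $Q_2$ to be independent of $u$, so $Q_2=a(t)\partial_t+b(t,x,y)\partial_x+c(t,x,y)\partial_y+\varphi(t,x,y)\partial_u$. The stabilizer of $\partial_u$ within \eqref{eqgrp} consists of transformations $t'=T(t)$, $x'=X(t,x,y)$, $y'=Y(t,x,y)$, $u'=u+\Phi(t,x,y)$. I split into three subcases. If $a\not\equiv 0$, first take $T$ with $\dot T=1/a$ to normalize $a\equiv 1$, then solve the linear first-order PDEs $Q_2(X)=Q_2(Y)=0$ and $Q_2(u+\Phi)=0$ by characteristics to obtain $Q_2=\partial_{t'}$, reducing to the previous case. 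If $a\equiv 0$ and $(b,c)\not\equiv(0,0)$, I treat $t$ as a parameter and straighten the planar field $b\partial_x+c\partial_y$ to $\partial_{x'}$ by a suitable $(X,Y)$, then kill the remaining $\partial_u$-component via $\Phi$, producing $\langle\partial_u,\partial_x\rangle$. If $a\equiv 0$ and $(b,c)\equiv(0,0)$, then $Q_2=\varphi(t,x,y)\partial_u$, and I must verify that $\varphi$ genuinely depends on $x$ or $y$: admissibility of $\partial_u\in A$ forces $F,G,H,K$ to be independent of $u$ via Proposition \ref{symmcondition}, and then substituting $Q_2=\varphi(t)\partial_u$ into the remaining determining equation reduces it to $\varphi'(t)=0$, making $Q_2$ a scalar multiple of $Q_1$. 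Hence $\varphi_x^2+\varphi_y^2\not\equiv 0$; setting $X=\varphi$ and extending by $Y$ so that the $(x,y)$-Jacobian is nonzero, I obtain $Q_2=x'\partial_{u'}$, i.e.\ $\langle\partial_u,x\partial_u\rangle$.

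The main technical obstacle is this last (rank-one) subcase, where the straightening arguments available in the rank-two cases fail because $Q_1$ and $Q_2$ are collinear as vector fields, and I must instead invoke the explicit determining equations to rule out the spurious form $\varphi(t)\partial_u$ and certify that the rank-one realization is genuinely inequivalent to the other two. The remaining straightening steps are routine applications of local existence for first-order linear PDEs and of the implicit-function/flow-box theorem.
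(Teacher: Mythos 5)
Your proposal is correct and follows essentially the same route as the paper: normalize $Q_1$ to $\partial_t$ or $\partial_u$ via Theorem \ref{canonicalform}, use the residual equivalence group and commutativity to straighten $Q_2$ in the rank-two subcases, and in the collinear subcase invoke the determining equations of Proposition \ref{symmcondition} to exclude $Q_2=\varphi(t)\partial_u$ and land on $\langle\partial_u,x\partial_u\rangle$. The only difference is organizational (you split by the canonical form of $Q_1$ and the coefficient structure of $Q_2$, whereas the paper splits by $\operatorname{rank}A$ first), which does not affect the substance.
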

\begin{proof}
Now we take $\dim A=2$. We have $A=\langle Q_1, Q_2\rangle$. For $\text{rank}\, A=1$  the case $Q_1=\gen t$ is impossible
because it leads to $Q_2=a(t)\gen t$ and $[Q_1,Q_2]=0$ gives $\dot{a}(t)=0$, so $\dim A=1$.
We take $Q_1=\gen u$ and $Q_2=\varphi(t,x,y)\gen u$. The residual equivalence group $G^\sim(\gen u)$ is given by
 $$t'=T(t),\, x'=X(t,x,y),\,y'=Y(t,x,y),\, u'=u + U(t,x,y)$$
where $\dot{T}(t)\left(X_xY_y-X_yY_x\right)\neq 0$. Under such a transformation $Q_2$ is mapped to $Q'_2=\varphi(t,x,y)\gen {u'}$. If $\varphi_x\neq 0$  (or $\varphi_y\neq 0$)
we may take
$X(t,x,y)=\varphi(t,x,y)$ so that the $Q'_2=x'\gen {u'}$ and then we have $\langle \gen u, x\gen u\rangle$ in canonical
form. If $\varphi_x=\varphi_y=0$ then $Q_2=\varphi(t)\gen u$. Substituting $\gen u$ and $\varphi(t)\gen u$ into the equation (\ref{sym2})  we find
$G_u=0,\; \dot{\varphi}(t)=0$ so $\varphi(t)$ is a constant and thus $\dim A=1$ which is a contradiction, so we have no
admissible canonical form in this case.

If $\dim A=2,\; \text{rank}\, A=2$ then we may take $Q_1=\gen t$ or $Q_1=\gen u$. If $Q_1=\gen t$ then, by
commutativity and the fact that  $A=\langle Q_1, Q_2\rangle$, we may take $Q_2=b(x, y, u)\gen x +c(x, y, u)\gen y + \varphi(x, y, u)\gen u$.
The residual equivalence group $G^\sim(\gen t)$ consists of invertible transformations of the form
$$t'=t+l,\; x'=X(x, ,y, u),\; y'=Y(x, ,y, u),\; u'=U(x, y, u)$$
 where $l=\text{constant}$. Under such a transformation $Q_2$ is mapped to
$Q'_2=Q_2(X)\gen {x'} + Q_2(Y)\gen {y'} +Q_2(U)\gen {u'}$. We can always choose $X,\, Y$ and $U$ so that $Q_2(U)=1,\; Q_2(X)=Q_2(Y)=0$.
Thus we have the canonical form $\langle \gen t, \gen u\rangle$. If we take $Q_1=\gen u$, then we have $Q_2=a(t)\gen t + b(t,x,y)\gen x + c(t,x,y)\gen y+\varphi(t,x,y)\gen u$.
The residual equivalence group $G^\sim(\gen u)$ is the group of transformations
$$t'=T(t),\, x'=X(t,x,y),\,y'=Y(t,x,y),\,u'=u + U(t,x,y)$$
with $\dot{T}(t)\left(X_xY_y-X_yY_x\right)\neq 0$.
Under such a transformation, $Q_2$ is mapped to
$$Q'_2=a(t)\dot{T}(t)\gen {t'} + Q_2(X)\gen {x'}+ Q_2(Y)\gen {y'}+ [\varphi+Q_2(U)]\gen {u'}.$$
If $a(t)\neq 0$, we may choose $T(t),\, X(t,x,y),\, Y(t,x,y)$ and $U(t,x,y)$ so that
$a(t)\dot{T}(t)=1,\, Q_2(X)=0,\, Q_2(Y)=0,$ and $\varphi+Q_2(U)=0$. Then we may take $Q_1=\gen u,\,
Q_2=\gen t$.  If, however, $a(t)=0$, then $Q_2=b(t, x, y)\gen x +c(t, x, y)\gen y + \varphi(t, x, y)\gen u$
and we then have $b^2+c^2\neq 0$ since we have ${\rm rank}\, A=2$. Then
$$Q'_2=[b(t,x,y)X_x+c(t,x,y)X_y]\gen {x'} +[b(t,x,y)Y_x+c(t,x,y)Y_y]\gen {y'} + [\varphi+b(t,x,y)U_x+c(t,x,y)U_y]\gen {u'}$$
 and because $b^2+c^2\neq0$
we may choose $X(t,x,y),\; Y(t,x,y)$ and $U(t,x,y)$ so that
 $$\varphi+b(t,x,y)U_x+c(t,x,y)U_y=0,\; b(t,x,y)Y_x+c(t,x,y)Y_y=0,\;b(t,x,y)X_x+c(t,x,y)X_y=1$$
 and this gives
 $Q'_2=\gen{x'}$, so we have the canonical form $A=\langle\gen u, \gen x\rangle$.
Hence there are the following admissible,
canonical forms for $\dim A=2$:
 \[ \langle\gen u, x\gen u \rangle,\qquad \langle\gen t, \gen u\rangle,\qquad \langle\gen x, \gen u \rangle. \]
\end{proof}
\begin{theorem}\label{abeliansymm3}
Inequivalent three-dimensional abelian admissible Lie symmetry algebras  are
\begin{eqnarray*}
\langle \gen t, \gen x, \gen u\rangle,\;\; \langle \gen t, \gen u, x\gen u \rangle,~\langle \gen u, x\gen u, y\gen u \rangle,\;\;
\langle \gen x, \gen y, \gen u\rangle,\;\;\langle \gen u, x\gen u, \varphi(t,x)\gen u \rangle,\;  {\rm with}\;\varphi_{xx}\neq0,
\end{eqnarray*}
\end{theorem}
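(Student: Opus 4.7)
The approach is to enumerate three-dimensional abelian extensions of each two-dimensional canonical algebra supplied by Theorem~\ref{abeliansymm2}. For each base $A_2=\langle Q_1,Q_2\rangle$, take a third vector field $Q_3$ of the form~\eqref{vecinf}, impose the commutation relations $[Q_1,Q_3]=[Q_2,Q_3]=0$, and then bring $Q_3$ to canonical form under the residual equivalence subgroup that stabilises $A_2$. Surviving candidates are fed into the determining system~\eqref{determin} to discard any branch that would force $\dim A<3$ or violate $FG-H^2/4\neq 0$.

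For the extension of $\langle\gen t,\gen u\rangle$, commutativity forces $Q_3=\alpha\gen t+b(x,y)\gen x+c(x,y)\gen y+\varphi(x,y)\gen u$ with $\alpha$ constant, and after absorbing $\alpha\gen t$ we take $\alpha=0$. The residual group $t'=t+\mathrm{const}$, $x'=X(x,y)$, $y'=Y(x,y)$, $u'=u+U(x,y)$ rectifies $Q_3$ to $\gen x$ via the characteristic system $bX_x+cX_y=1$, $bY_x+cY_y=0$, $bU_x+cU_y+\varphi=0$ when $(b,c)\neq(0,0)$, giving $\langle\gen t,\gen x,\gen u\rangle$; if $b=c=0$ the nonconstant $\varphi$ is straightened to $x$ by $X=\varphi$, giving $\langle\gen t,\gen u,x\gen u\rangle$. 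For the extension of $\langle\gen x,\gen u\rangle$, commutativity yields $Q_3=a(t)\gen t+b(t,y)\gen x+c(t,y)\gen y+\varphi(t,y)\gen u$; if $a\not\equiv 0$ we rectify $a\gen t\to\gen t$ and reduce to the first case, if $a\equiv 0$ and $c\not\equiv 0$ the analogous straightening produces $\langle\gen x,\gen y,\gen u\rangle$, and if $a\equiv c\equiv 0$ substitution into~\eqref{sym1}--\eqref{sym4} (with $F,G,H,K$ already independent of $x$ and $u$) forces $b,\varphi$ to be constants, contradicting linear independence.

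For the extension of $\langle\gen u,x\gen u\rangle$, the brackets $[\gen u,Q_3]=0$ and $[x\gen u,Q_3]=0$ strip $u$-dependence from the coefficients and kill the $\gen x$-coefficient, leaving $Q_3=a(t)\gen t+c(t,x,y)\gen y+\varphi(t,x,y)\gen u$. The residual stabiliser has the shape $t'=T(t)$, $x'=p(t)x+q(t)$, $y'=Y(t,x,y)$, $u'=\lambda(t,y)u+\mu(t,x,y)$ with nonzero Jacobian. If $a\not\equiv 0$ we rectify to $\gen t$ and recover $\langle\gen t,\gen u,x\gen u\rangle$. If $a\equiv 0$ and $c\not\equiv 0$, a careful combination of the $Y$ and $\mu$ freedoms with linear combinations against $\gen u$ and $x\gen u$ plus the determining equations leads to the rank-one canonical form $y\gen u$, hence $\langle\gen u,x\gen u,y\gen u\rangle$. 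Finally, if $a\equiv c\equiv 0$ then $Q_3=\varphi(t,x,y)\gen u$ with $\varphi$ not affine in $x$; when $\varphi_y\not\equiv 0$ the choice $y'=Y=\varphi$ reduces to the previous rank-one form, while $\varphi_y\equiv 0$ with $\varphi_{xx}\not\equiv 0$ gives the last canonical form $\langle\gen u,x\gen u,\varphi(t,x)\gen u\rangle$.

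The principal technical obstacle is in the third extension's subcase $a\equiv 0$, $c\not\equiv 0$: the naive straightening $cY_y=1$, $c\mu_y+\varphi=0$ produces the generator $\gen y$ rather than $y\gen u$, so one must combine this with further use of the stabiliser action on $u'$ and with constraints imposed by the determining equations~\eqref{sym1}--\eqref{sym4} on the coefficients $F,G,H,K$ already restricted by the two preceding symmetries, in order to confirm that the only admissible canonical form in this subcase is $\langle\gen u,x\gen u,y\gen u\rangle$. The remaining case checks are routine characteristic-system computations.
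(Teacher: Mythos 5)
Your overall strategy—extend each two-dimensional canonical algebra from Theorem~\ref{abeliansymm2} by a third generator, normalise it under the residual stabiliser, and kill degenerate branches with the determining equations~\eqref{determin}—is essentially the paper's own strategy (the paper merely organises the cases by $\operatorname{rank}A$ instead of by the chosen two-dimensional subalgebra), and your treatment of the extensions of $\langle\gen t,\gen u\rangle$ and $\langle\gen x,\gen u\rangle$ matches the paper's computations. The problem is the subcase you yourself flag as the ``principal technical obstacle,'' and there the argument does not just have a gap: the claimed conclusion is impossible. If $Q_3=c(t,x,y)\gen y+\varphi(t,x,y)\gen u$ with $c\not\equiv0$, then $\langle\gen u,\,x\gen u,\,Q_3\rangle$ spans a two-dimensional distribution at generic points, whereas $\langle\gen u,\,x\gen u,\,y\gen u\rangle$ spans a one-dimensional one; the rank of the distribution generated by a Lie algebra of vector fields is invariant under any point transformation, so no combination of the stabiliser freedoms and linear recombinations of generators can turn this $Q_3$ into $y\gen u$. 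What the straightening $cY_y=1$, $\varphi+cU_y=0$ actually produces is the algebra $\langle\gen u,\,x\gen u,\,\gen y\rangle$, and this branch cannot be discarded by the determining equations either: substituting $\gen u$, $x\gen u$, $\gen y$ into~\eqref{determin} only forces $F,G,H,K$ to be independent of $u$, $u_x$ and $y$, i.e.\ it is realised by the genuinely nonlinear family $u_t=F(t,x,u_y)u_{xx}+G(t,x,u_y)u_{yy}+H(t,x,u_y)u_{xy}+K(t,x,u_y)$. Nor is it equivalent to a listed form: the subalgebra $\{Q\in A: a(t)=0\}$ and its rank are invariants (the $\gen t$-component transforms as $a\dot T$, so it cannot be created or destroyed), and these invariants separate $\langle\gen u,x\gen u,\gen y\rangle$ from $\langle\gen t,\gen u,x\gen u\rangle$, from $\langle\gen u,x\gen u,y\gen u\rangle$, and from $\langle\gen x,\gen y,\gen u\rangle$.

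So to complete your proof you must either exhibit an explicit equivalence transformation mapping $\langle\gen u,x\gen u,\gen y\rangle$ to one of the five listed algebras (which the invariants above rule out), or show by a concrete computation with~\eqref{determin} that every equation admitting it necessarily admits further symmetries placing it in a higher-dimensional class—neither of which your appeal to ``a careful combination of the $Y$ and $\mu$ freedoms\ldots plus the determining equations'' supplies. Be aware that the paper's own proof is also terse at exactly this point: in the $\operatorname{rank}A=2$ analysis with $Q_1\wedge Q_2=0$ it simply asserts ``this then gives us $Q_3=\gen t$,'' which covers only the sub-branch $a(t)\neq0$ of your $Q_3$. A secondary, much smaller point: in the final subcase $Q_3=\varphi(t,x,y)\gen u$ you assume $\varphi$ is ``not affine in $x$'' without first running the determining-equation argument that eliminates $\varphi=\alpha(t)+\beta(t)x$ with nonconstant $\alpha,\beta$; that step is needed and is exactly the computation the paper performs.
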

\begin{proof}
For $\dim A=3$ we have $A=\langle Q_1, Q_2, Q_3\rangle$. If $\text{rank}\, A=1$ then we may take $Q_1=\gen u,\;
Q_2=x\gen u$ by the previous argument for $\dim A=2$. We then have $Q_3=\varphi(t,x,y)\gen u$.
If $\varphi_x=\varphi_y=0$
then we find $Q_3=\varphi(t)\gen u$, which $\dim A=3$  gives $\dot{\varphi}(t)\neq0$.
But substituting $\gen u, \varphi(t)\gen u$ in (\ref{sym4}) as symmetries lead to $\dot{\varphi}(t)=0$  that is a contradiction.
If $\varphi_x\neq0,~\varphi_y=0$   then
$\varphi_{xx}\neq0$. Because $\varphi_{xx}=0$ gives  $\varphi(t, x)=\alpha(t)+\beta(t)x$.
Then substituting $\gen u, x\gen u, [\alpha(t)+\beta(t)x]\gen u$ into (\ref{sym4})
we obtain $\dot{\alpha}(t)+ \dot{\beta}(t)x=0$, which is possible only for constant $\alpha(t),\, \beta(t)$.
 But this means that $Q_3$ is a constant
linear combination of $Q_1,$ and $Q_2$, contradicting $\dim A=3$.
So, $\varphi_{xx}(t,x)\neq 0$. For $A=\langle \gen u, x\gen u, \varphi(t, x)\gen u\rangle$ as a
symmetry algebra we find the following form for the  two--dimensional evolution equation:
\begin{eqnarray}\label{eq:lin2}
 u_t=\frac{\varphi_t(t,x)}{\varphi_{xx}(t,x)}u_{xx}+ G(t,x,y)u_{yy}+ H(t,x,y)u_{xy}+ K(t,x,y),
 \end{eqnarray}
with $\varphi_t(t,x)\neq0 , \varphi_{xx}(t,x)\neq 0$. Thus, the evolution
equation is  semi-linear. Now, if $\varphi_y\neq0$, since the residual equivalence group $G^\sim(\gen u, x\gen u)$
is
$$t'=T(t),\, x'=x,\,y'=Y(t,x,y),\,u'=u + U(t,x,y)$$
 with $Y_y\neq 0$, then we can choose $\varphi(t,x,y)=Y(t,x,y)$ and so
$Q_3=y\gen u$ , thus $A=\langle\gen u, x\gen u, y\gen u\rangle$.

We now come to $\dim A=3,\; {\rm rank}\, A=2$. If we take $Q_1=\gen t$ then we may take $Q_2=\gen u$ since
$\langle Q_1, Q_2\rangle$ is a two-dimensional abelian subalgebra of $A$, and there is only one type of
two-dimensional abelian algebra containing $\gen t$ as we have seen. Then we have
$Q_3=a\gen t + b(x, y)\gen x +c(x, y)\gen y + \varphi(x, y)\gen u$ with $a=\,{\rm constant}$ and so we may take
$Q_3=b(x, y)\gen x + c(x, y)\gen y + \varphi(x, y)\gen u$.
The requirement that ${\rm rank}\, A=2$ then gives $b(x, y)= c(x, y)=0$ and we find that $Q_3=\varphi(x, y)\gen u$.
Obviously, $\varphi_x^2+\varphi_y^2\neq0$ since we
must have $\dim A=3$. Now, the residual equivalence group  $G^\sim(\gen t, \gen u)$ is given by
transformations of the form
$$t'=t+k,\; x'=X(x, y),\; y'=Y(x, y),\; u'=u + U(x, y)$$
 with $X_x.Y_y\neq 0$.
 We see that under such a transformation, $Q_3=\varphi(x, y)\gen u$ is transformed to $Q'_3=\varphi(x, y)\gen {u'}$ and
we may take $\varphi(x, y)=X(x, y)$ (or  $\varphi(x, y)=Y(x, y)$) giving
$Q'_3=x'\gen {u'}$ (or $Q'_3=y'\gen {u'}$), so we find the canonical form $A=\langle \gen t, \gen u, x\gen u\rangle$.   If $Q_1=\gen u$,
then we may take   $Q_2=\gen t$ or $Q_2=\gen x$ if $Q_1\wedge Q_2\neq 0$. If $Q_2=\gen t$ then $Q_3=x\gen u$ as
before. If $Q_2=\gen x$ then $Q_3=b(t)\gen x+\varphi(t)\gen u$ because ${\rm rank}\, A=2$. However, putting $Q_1=\gen
u,\; Q_2=\gen x,\; Q_3=b(t)\gen x+\varphi(t)\gen u$ into the (\ref{sym4}) equation, we obtain
$\dot{\varphi}(t)-\dot{b}(t)u_x=0$, which means that $\dot{b}(t)=\dot{\varphi}(t)=0$ and hence $Q_3$ is a linear combination
of $Q_1$ and $Q_2$, which is a contradiction. Thus we cannot have $Q_2=\gen x$ in this case. If $Q_1\wedge Q_2=0$
then $Q_2=\varphi(t,x,y)\gen u$, and since $\langle Q_1, Q_2\rangle$ is a two-dimensional subalgebra of $A=\langle Q_1,
Q_2, Q_3\rangle$ then we know from the above that we may take $Q_2=x\gen u$. This then gives us $Q_3=\gen t$ in
canonical form. Hence we have only the case $A=\langle \gen t, \gen u, x\gen u\rangle$ when $\dim A=3,\; {\rm
rank}\, A=2$.

Now consider the case $\dim A=3,\, {\rm rank}\, A=3$. If we take $Q_1=\gen t$ and $Q_2=\gen u$ then
we may take $Q_3=b(x,y)\gen x+c(x,y)\gen x+\varphi(x,y)\gen u$ with $b^2+c^2\neq0$ since
we have $\gen t\wedge \gen u\wedge Q_3\neq0$. Under $G^\sim(\gen t, \gen u)$ transformation $Q_3$ is mapped
to
 $$Q'_3=[b(x,y)X_x+c(x,y)X_y]\gen {x'} +[b(x,y)Y_x+c(x,y)Y_y]\gen {y'} + [\varphi+b(x,y)U_x+c(x,y)U_y]\gen {u'}$$
 and because $b^2+c^2\neq0$
we can choose $X(x,y),\; Y(x,y)$ and $U(x,y)$ so that
 $\varphi+b(x,y)U_x+c(x,y)U_y=0,\; b(x,y)Y_x+c(x,y)Y_y=0,\;b(x,y)X_x+c(x,y)X_y=1$
 and this gives $Q'_3=\gen{x'}$, so we have the canonical form $A=\langle\gen t, \gen y, \gen x\rangle$.
Now if we take $Q_1=\gen x$ and $Q_2=\gen u$ then we can take $Q_3=a(t)\gen t+b(t,y)\gen x+c(t,y)\gen x+\varphi(t,y)\gen u$
with $a^2+c^2\neq0$ since we have $\gen x\wedge \gen u\wedge Q_3\neq0$. The residual equivalence group $G^\sim(\gen x, \gen u)$
is
 $$t'=T(t),\, x'=x+\xi(t,y),\,y'=Y(t,y),\,u'=u + U(t,y)$$
  with $\dot{T}(t)\neq0$ and $Y_y\neq 0$.
Under this transformation $Q_3$ maps to
\[Q'_3=a(t)\dot{T}(t)\gen {t'}+[a(t)\xi_t+c\xi_y+\varphi]\gen {x'} +[a(t)Y_t+cY_y]\gen {y'} + [a(t)U_t+cU_y+\varphi]\gen {u'}.\]
If $a(t)=0$ then $c\neq0$ and we may choose $\xi, Y$ and $U$ so that $c\xi_y+\varphi=0$, $cY_y=1$ and $cU_y+\varphi=0$.
Thus we find $A=\langle \gen x, \gen u, \gen y\rangle$. The $a(t)\neq0$ doesn't lead to a new case.
\end{proof}
\begin{theorem}\label{abeliansymm4}
Inequivalent  abelian admissible Lie symmetry algebras $A$ with ${\rm dimension}\; A\geq 4$  are
\begin{eqnarray*}
\langle \gen t, \gen u, x\gen u, y\gen u \rangle,\;\;\langle \gen t, \gen x, \gen u, y\gen u \rangle,
\langle \gen t, \gen x, \gen u, y\gen x \rangle,\;\;\langle \gen t, \gen x, \gen y, \gen u \rangle,\\
\langle \gen u, x\gen u, y\gen u, \varphi(t,x,y)\gen u, q_1(t,x,y)\gen u,\dots, q_k(t,x,y)\gen u \rangle,
\;{\rm with}\;\;\varphi_{xx}\neq 0,\,(q_i)_{xx}\neq 0,\\
\langle \gen t, \gen u, x\gen u, y\gen u, \varphi(x,y)\gen u, q_1(x,y)\gen u,\dots, q_k(x,y)\gen u \rangle,\;\;
{\rm with}\;\;\varphi_{xy}\neq 0,\,(q_i)_{xy}\neq 0.
\end{eqnarray*}
\end{theorem}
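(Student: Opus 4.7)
The plan is to induct on $\dim A$, building each new canonical form from a $(k-1)$-dimensional abelian admissible subalgebra $A'\subset A$ that is already in one of the forms produced by Theorem \ref{abeliansymm3} (and at higher dimensions, one already produced by the induction). Given $A'$, I would pick a commuting generator $Q_k\in A\setminus A'$, use the residual equivalence group $G^\sim(A')$ (the stabiliser of the fixed basis of $A'$ inside the full equivalence group of Theorem \ref{pointequiv}) to bring $Q_k$ to a normal form, and then feed the candidate algebra into the determining equations of Proposition \ref{symmcondition} to test admissibility and eliminate spurious cases.

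For the base cases at $\dim A=4$ I would split by $\mathrm{rank}\,A'$. When $A'$ has rank $3$, namely $A' = \langle\gen x,\gen y,\gen u\rangle$ or $\langle\gen t,\gen x,\gen u\rangle$, the three translational commutation relations force $Q_4$ to have coefficients depending only on the missing coordinate, and the residual reparametrisations of that coordinate reduce $Q_4$ to one of $\gen t$, $y\gen u$, or $y\gen x$, producing the canonical forms $\langle\gen t,\gen x,\gen y,\gen u\rangle$, $\langle\gen t,\gen x,\gen u,y\gen u\rangle$, $\langle\gen t,\gen x,\gen u,y\gen x\rangle$. When $A' = \langle\gen t,\gen u,x\gen u\rangle$ has rank $2$, $[\gen t,Q_4]=0$ kills the $t$-dependence and the residual equivalence group reduces $Q_4$ to $y\gen u$, giving $\langle\gen t,\gen u,x\gen u,y\gen u\rangle$. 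When $A'$ has rank $1$, the relations $[\gen u,Q_4]=[x\gen u,Q_4]=[y\gen u,Q_4]=0$ force $Q_4 = q(t,x,y)\gen u$, and linear independence from $\{\gen u,x\gen u,y\gen u\}$ then imposes $q_{xx}\neq 0$ (or the analogous condition with $\varphi_{xy}\neq 0$), launching the infinite families.

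For the two infinite rank-one families I would iterate: at each step, any further commuting generator is again of the form $q(t,x,y)\gen u$, and no element of $G^\sim(A')$ acts on such a generator other than by additive shifts along $\{1,x,y\}$, so once $q_{xx}\neq 0$ the generator cannot be absorbed into the existing basis. The dichotomy between families is governed by whether $\gen t\in A$: if $\gen t$ is present, $[\gen t,q\gen u]=0$ forces $q_t=0$ and the new generators are of the form $\varphi(x,y)\gen u$ with $\varphi_{xy}\neq 0$; otherwise the full $t$-dependence is retained, yielding $\varphi(t,x,y)\gen u$ with $\varphi_{xx}\neq 0$. Substituting these into the determining equation \eqref{sym4} produces a linear PDE in $q$ whose solution space supplies the $q_i$, confirming that arbitrarily many such generators may coexist.

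The main obstacle will be to \emph{close} each finite case by showing that no further inequivalent commuting generator exists. For $\langle\gen t,\gen x,\gen y,\gen u\rangle$ one must show that any additional abelian generator has constant coefficients and hence already lies in the basis; this follows from the four translational commutators but requires careful bookkeeping of the residual group action. A secondary difficulty is verifying that the three rank-$3$ extensions $\langle\gen t,\gen x,\gen u,y\gen u\rangle$, $\langle\gen t,\gen x,\gen u,y\gen x\rangle$ and $\langle\gen t,\gen x,\gen y,\gen u\rangle$ are mutually inequivalent under $G^\sim(\langle\gen t,\gen x,\gen u\rangle)$, and that each remains compatible with Proposition \ref{symmcondition} for some nontrivial choice of $F,G,H,K$, which places consistency constraints on these arbitrary elements that must be checked case by case.
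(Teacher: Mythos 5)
Your plan follows essentially the same route as the paper: a case split by the rank of $A$, normal forms for each new generator obtained via the residual equivalence group of a canonical lower-dimensional subalgebra from Theorem~\ref{abeliansymm3}, and the determining equations of Proposition~\ref{symmcondition} to eliminate inadmissible extensions (e.g.\ $Q_4=b(t)\gen x+c(t)\gen y+\varphi(t)\gen u$ over $\langle \gen x,\gen y,\gen u\rangle$, which they force to have constant coefficients). The only difference is presentational — you phrase the case analysis as an induction on dimension whereas the paper works through $\dim A=4$, $\dim A\geq 5$ and the two infinite rank-one and rank-two families directly — and the closure arguments you flag as remaining work are exactly the ones the paper supplies.
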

\begin{proof}
Now consider the case $\dim A=4,\, {\rm rank}\, A=4$. We have $A=\langle Q_1, Q_2, Q_3, Q_4\rangle$ and
$$Q_i=a_i(t)\gen t + b_i(t,x,y,u)\gen x + c_i(t,x,y,u)\gen y+\varphi_i(t,x,y,u)\gen u$$
 for $i=1, 2, 3$.
Since $Q_1\wedge Q_2\wedge Q_3\wedge Q_4\neq 0$
at least one of the coefficients $a_i(t)$  must be different from zero. So, assume $a_1(t)\neq 0$. In this case,
we may take $Q_1=\gen t$ in canonical form, according to Theorem \ref{canonicalform}. Then we must have
$Q_i=a_i\gen t + b_i(x,y,u)\gen x + c_i(x,y,u)\gen y+\varphi_i(x,y,u)\gen u$ for $i=2, 3, 4$.
We may take $a_i=0$ for $i=2, 3, 4$ because of
linear independence of $Q_1, Q_2, Q_3, Q_4$.
Hence
$$Q_2=b_2(x,y,u)\gen x + c_2(x,y,u)\gen y+\varphi_2(x,y,u)\gen u, Q_3=b_3(x,y,u)\gen x + c_3(x,y,u)\gen y+\varphi_3(x,y,u)\gen u$$
and $Q_4=b_4(x,y,u)\gen x + c_4(x,y,u)\gen y+\varphi_4(x,y,u)\gen u$.
Consequently, we may, by Theorem \ref{canonicalform} take $Q_2=\gen x$ and $Q_3=\gen u$.
Finally, this means that $Q_4=b_4(y)\gen x + c_4(y)\gen y+\varphi_4(y)\gen u$ by commutativity,
and $c_4(y)\neq 0$ because $\gen t\wedge \gen x\wedge \gen u\wedge Q_4\neq 0$.
The residual equivalence group $G^\sim(\gen t, \gen x, \gen u)$ is given by transformations of the form
$$t'=t+k,\; x'=x + Y(y),\; y'=Y(y),\; u'=u+U(y)$$ with $Y_y\neq 0$.
Under such a transformation, $Q_4$ is transformed to
$$Q'_4=[b_4(y)+c_4(y)X_y]\gen {x'} + c_4(y)Y_y\gen {y'}+[\varphi_4(y)+c_4(y)U_y]\gen {y'}.$$
Since $c_4(y)\neq 0$ then we may choose $X(y), Y(y)$ and $U(y)$ so
that $b_4(y)+c_4(y)X_y=0, c_4(y)Y_y=1$ and $\varphi_4(y)+c_4(y)U_y=0$, so we may take $Q_4=\gen y$ in
canonical form. Consequently, we have only one canonical form $A=\langle \gen t, \gen x, \gen u, \gen y\rangle$.

For $\dim A\geq5,\, {\rm rank}\, A=4$, we may choose $Q_1=\gen t, Q_2=\gen x, Q_3=\gen u, Q_4=\gen y$. For any
other symmetry $Q=a(t)\gen t + b(t, x, y,u)\gen x + c(t, x, y,u)\gen y+\varphi(t, x, y,u)\gen u$, the coefficients
must be constants, so that $Q$ will be a constant linear combination of $Q_1, Q_2, Q_3, Q_4$ contradicting with
$\dim A\geq5$.

For $\dim A= 4,\,{\rm rank}\, A=3$, we may take $Q_1=\gen t,~ Q_2=\gen x,~ Q_3=\gen u$
and for any other symmetry we take $Q_4=b(y)\gen x+\varphi(y)\gen u$.
The residual equivalence group $G^\sim(\gen t, \gen x, \gen u)$
is
$$t'=t+k,\, x'=x+\xi(y),\,y'=Y(y),\,u'=u + U(y)$$
 with $Y_y\neq0$.
Since $\dim A= 4$ then we have $b_y^2+\varphi_y^2\neq0$. If $b_y=0,~\varphi\neq0$, since $Y_y\neq0$
thus we may choose $\varphi(y)=Y(y)$ and then we obtain $Q_4=y\gen u$.
Therefore we find $A=\langle \gen t, \gen x, \gen u, y\gen u\rangle$.
Now if $b_y\neq0,~\varphi=0$,  we may choose $b(y)=Y(y)$ and then we find $Q_4=y\gen x$ and thus
 we have $A=\langle \gen t, \gen x, \gen u, y\gen x\rangle$. We may $Q_1=\gen x,~ Q_2=\gen y,~ Q_3=\gen u$
and any other symmetry is $Q_4=b(t)\gen x+c(t)\gen y+\varphi(t)\gen u$, because $A$ is a abelian
Lie algebra with $\dim A= 4$. Putting $Q_1, Q_2, Q_3, Q_4$ in determining equations (\ref{determin})
leads to $\dot{\varphi}(t)-\dot{b}(t)u_x-\dot{c}(t)u_y=0$. Therefore we find
$\dot{\varphi}(t)=\dot{b}(t)=\dot{c}(t)=0$, that means $Q_4$ is a Linear combination of
$Q_1, Q_2, Q_3$ contradicting with $\dim A= 4$.
 Consequently we have only two canonical forms $A=\langle \gen t, \gen x, \gen u, y\gen u\rangle$
 and  $A=\langle \gen t, \gen x, \gen u, y\gen x\rangle$ for an
abelian Lie algebra with $\dim A=4,\, \text{rank}\, A=3$.

Now we come to $\dim A\geq 4, {\rm rank}\, A=3$. We may take $Q_1=\gen t, Q_2=\gen x, Q_3=\gen u$.
For any other symmetry vector field $Q=a(t)\gen t + b(t,x,u)\gen x + c(t,x,u)\gen u$ the coefficients must be
constants, so that $Q$ will be a constant linear combination of $Q_1, Q_2, Q_3$, contradicting $\dim A\geq 4$. So
there are no admissible abelian Lie algebras $A$ with $\dim A\geq 4,\, {\rm rank}\, A=3$.

For $\dim A= 4,\,{\rm rank}\, A=2$ we may take $Q_1=\gen t,\, Q_2=\gen u,\, Q_3=x\gen u$, and for $Q_4$ we take
$Q_4=\varphi(x,y)\gen u$ because $A$ is abelian and ${\rm rank}\, A=2$. Further $\varphi_y\neq0$, for otherwise $Q_4$ would
be a constant linear combination of $Q_2$ and $Q_3$, contradicting $\dim A=4$. Thus we have $A=\langle \gen t,
\gen u, x\gen u, \varphi(x,y)\gen u\rangle$ with $\varphi_y\neq0$.
The residual equivalence group  $G^\sim(\gen t, \gen u, x\gen u)$ is given by
transformations of the form
 $$t'=t+k,\; x'=x,\; y'=Y(x, y),\; u'=u + U(x, y)$$
  with $Y_y\neq 0$.
Under this transformation $Q_4$ maps to $Q'_4=\varphi(x,y)\gen u'$ and
since $\varphi_y\neq0$ we may choose $\varphi(x,y)=Y(x,y)$.
Thus $Q_4=y\gen u$ and $A=\langle \gen t, \gen u, x\gen u, y\gen u\rangle$.

For $\dim A= 5,\; \text{rank}\, A=2$ we find that any additional symmetry operator $Q$ must be of the form
$Q=\varphi(x,y)\gen u$ with $\varphi_y\neq 0$ and, substituting the vector fields a symmetries in the (\ref{determin}) equations
gives $\varphi_{xy}K+\varphi_{yy}G+\varphi_{xx}F=0$ with $\varphi_{xy}^2+\varphi_{yy}^2+\varphi_{xx}^2\neq0$.
From this it follows that $\varphi_x\neq0, \varphi_y\neq0$ and then $\varphi_{xy}\neq0$. Hence we find
$A=\langle \gen t, \gen u, x\gen u, y\gen u, \varphi(x,y)\gen u\rangle$
with $\varphi_{xy}\neq0$.
For $\dim A\geq 5,\; \text{rank}\, A=2$, arguments similar to those for $\dim A=5,\,\text{rank}\, A=2$ give us
abelian Lie algebras
\begin{eqnarray}\label{eq:alg}
A=\langle \gen t, \gen u, x\gen u, y\gen u, \varphi(x,y)\gen u, q_1(x,y)\gen u, \ldots, q_k(x,y)\gen u\rangle
\end{eqnarray}
with the sole proviso that each of the $q:$ s satisfy $q_{xy}\neq 0$ and
\[\frac{\varphi_{xx}F+\varphi_{yy}G}{\varphi_{xy}}=\frac{q_{xx}F+q_{yy}G}{q_{xy}}, \]
as well as linear independence of the vector fields (\ref{eq:alg}).
We obtain equations of the form
\begin{eqnarray}\label{eq:lin3}
 u_t=F(x,y)u_{xx} + G(x,y)u_{yy}+H(x,y)u_{xy}-\frac{\varphi_{yy}G+\varphi_{xx}F}{\varphi_{xy}},
\end{eqnarray}
with $\varphi_{xy}\neq 0,\; \varphi_{xx}^2+\varphi_{yy}^2\neq0$.

For $\dim A\geq 4,\; \text{rank}\, A=1$, arguments similar to those for $\dim A=3,\,\text{rank}\, A=1$ give us
abelian Lie algebras $A$ of the form
\begin{eqnarray}\label{alg}
A=\langle \gen u, x\gen u, y\gen u, \varphi(t,x,y)\gen u, q_1(t,x,y)\gen u,\dots, q_k(t,x,y)\gen u \rangle
\end{eqnarray}
with the sole proviso that each of the $q:$ s satisfy $q_{xx}(t,x, y)\neq 0, \varphi(t,x,y)\neq0$ and
\[\frac{\varphi_t-\varphi_{yy}G-\varphi_{xy}K}{\varphi_{xx}}=\frac{q_t-q_{yy}G-q_{xy}K}{q_{xx}}, \]
as well as linear independence of the vector fields (\ref{alg}).
We obtain equations of the form
\begin{eqnarray}\label{eq:lin4}
 u_t=\left[\frac{\varphi_t-\varphi_{yy}G-\varphi_{xy}K}{\varphi_{xx}}\right]u_{xx} + G(t,x,y)u_{yy}+H(t,x,y)u_{xy} +K(t,x,y).
\end{eqnarray}
\end{proof}
We have the following corollary of the above results:
\begin{corollary}\label{linearizableeqns}
The evolution equations of the form (\ref{eveqn}) admitting abelian Lie algebras $\mathsf{A}$
with $\dim\mathsf{A}\geq 5$ or if $\dim\mathsf{A}\geq3,\;\; {\rm rank}\,\mathsf{A}= 1$
as symmetries are linearizable.
\end{corollary}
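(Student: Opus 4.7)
The plan is to read the corollary off directly from the classifications already established in Theorems~\ref{abeliansymm3} and~\ref{abeliansymm4}. The driving observation is that in every canonical form falling under either hypothesis, the arbitrary elements $F$, $G$, $H$ and $K$ are forced to be functions of the independent variables $(t,x,y)$ alone, so the evolution equation is already linear in $u$ and its derivatives (with at most an inhomogeneous source term), and hence is trivially linearizable by the identity transformation.

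For the rank-one case with $\dim\mathsf{A}\geq 3$, I would enumerate the admissible algebras given by Theorem~\ref{abeliansymm3}, namely $\langle\gen u, x\gen u, y\gen u\rangle$ and $\langle\gen u, x\gen u, \varphi(t,x)\gen u\rangle$ with $\varphi_{xx}\neq 0$, together with their extensions in Theorem~\ref{abeliansymm4} of the form~(\ref{alg}). Substituting the defining vector fields $\gen u$, $x\gen u$ and $y\gen u$ into the determining equations of Proposition~\ref{symmcondition} forces $F_u=F_{u_x}=F_{u_y}=0$, and likewise for $G$, $H$ and $K$. The resulting equations are precisely (\ref{eq:lin2}) or (\ref{eq:lin4}), whose coefficients depend only on $(t,x,y)$; these are linear.

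For $\dim\mathsf{A}\geq 5$ I would first invoke Theorem~\ref{abeliansymm4} to dispose of ranks~3 and~4, since at those ranks any additional independent symmetry must have constant coefficients and would therefore contradict the dimension bound $\dim\mathsf{A}\geq 5$. Only ranks~1 and~2 survive. Rank~1 returns the equation~(\ref{eq:lin4}) handled in the preceding paragraph. Rank~2 yields the family~(\ref{eq:alg}) together with the compatibility condition listed there, producing the equation~(\ref{eq:lin3}), whose coefficients are functions of $(x,y)$ alone; again, the equation is linear.

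The only step requiring care is to verify that the apparent source terms in (\ref{eq:lin2})--(\ref{eq:lin4}) are genuinely $u$-independent, for only then is the equation truly linear rather than merely semi-linear; this follows because the coefficients defining those source terms ($\varphi$ and the $q_i$) are themselves functions of $(t,x,y)$ by construction. Once this is checked there is no further obstacle: the corollary is essentially a synthesis of the two preceding classification theorems.
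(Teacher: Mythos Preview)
Your proposal is correct and follows essentially the same route as the paper's own proof: both arguments simply read off from Theorems~\ref{abeliansymm3} and~\ref{abeliansymm4} that the admissible canonical forms in the relevant dimension/rank ranges force the equations~(\ref{eq:lin2}), (\ref{eq:lin3}) and~(\ref{eq:lin4}), each of which is linear in $u$ and its derivatives. Your write-up is more explicit than the paper's (you spell out the rank-by-rank elimination for $\dim\mathsf A\ge5$ and the substitution into the determining equations), but the underlying strategy is identical.
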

\begin{proof}
The proof is just by calculation: for $\dim\mathsf{A}\geq 3,\;{\rm
rank}\,\mathsf{A}=1$ we find the evolution equations (\ref{eq:lin2}) and (\ref{eq:lin4}) which is linear in $u$ and its derivatives.
Similarly, the rank-two algebra
\[A=\langle \gen t, \gen u, x\gen u, y\gen u, \varphi(x,y)\gen u, q_1(x,y)\gen u,\dots, q_k(x,y)\gen u \rangle,\;\;
{\rm with}\;\;\varphi_{xy}\neq 0,\,(q_i)_{xy}\neq 0,\]
gives the (\ref{eq:lin3}) equation.
\end{proof}
 We have the following table of non-linear equations form of \eqref{eveqn} admitting abelian Lie algebras as
symmetries:
\begin{center}
\begin{tabular}{ |p{2.3cm}| p{12.5cm}| }
\hline
\multicolumn{2}{|c|}{The invariant solutions of symmetries admitting abelian Lie algebras for non-linear forms of \eqref{eveqn}} \\
\hline
Realizations & invariant solutions \\
\hline
$\langle \gen t\rangle$ & $u_t=\widetilde{F}(x,y,u,u_x, u_y)u_{xx} + \widetilde{G}(x,y, u, u_x, u_y)u_{yy}+\widetilde{H}(x,y,u,u_x, u_y)u_{xy}+\widetilde{K}(x,y,u,u_x, u_y)$ \\ \hline
$\langle \gen u\rangle$ & $ u_t=\widetilde{F}(t,x,y,u_x, u_y)u_{xx} + \widetilde{G}(t,x,y,u_x, u_y)u_{yy}+\widetilde{H}(t,x,y,u_x, u_y)u_{xy}+\widetilde{K}(t,x,y,u_x, u_y)$ \\ \hline
$\langle \gen t, \gen u\rangle$ & $ u_t=\widetilde{F}(x,y,u_x, u_y)u_{xx} + \widetilde{G}(x,y,u_x, u_y)u_{yy}+\widetilde{H}(x,y,u_x, u_y)u_{xy}+\widetilde{K}(x,y,u_x, u_y)$ \\ \hline
$\langle \gen x, \gen u\rangle$    &    $u_t=\widetilde{F}(t,y,u_x, u_y)u_{xx} + \widetilde{G}(t,y,u_x, u_y)u_{yy}+\widetilde{H}(t, y, u_x, u_y)u_{xy}+\widetilde{K}(t,y,u_x, u_y)$   \\ \hline
$\langle \gen u, x\gen u\rangle$    &    $u_t=\widetilde{F}(t,x,y,u_y)u_{xx} + \widetilde{G}(t,x,y,u_y)u_{yy}+\widetilde{H}(t,x,y,u_y)u_{xy}+\widetilde{K}(t,x,y,u_y)$   \\ \hline
$\langle \gen t, \gen x, \gen u\rangle$   &   $u_t=\widetilde{F}(y,u_x, u_y)u_{xx} + \widetilde{G}(y,u_x, u_y)u_{yy}+\widetilde{H}(y,u_x, u_y)u_{xy}+\widetilde{K}(y,u_x, u_y)$    \\ \hline
$\langle \gen t, \gen u, x\gen u\rangle$   &   $u_t=\widetilde{F}(x,y,u_y)u_{xx} + \widetilde{G}(x,y,u_y)u_{yy}+\widetilde{H}(x,y,u_y)u_{xy}+\widetilde{K}(x,y,u_y)$     \\ \hline
$\langle \gen x, \gen y, \gen u\rangle$      &   $u_t=\widetilde{F}(t,u_x,u_y)u_{xx} + \widetilde{G}(t,u_x,u_y)u_{yy}+\widetilde{H}(t,u_x,u_y)u_{xy}+\widetilde{K}(t,u_x,u_y)$     \\ \hline
 $\langle \gen t, \gen x, \gen u, y\gen u \rangle$    &   $ u_t=\widetilde{F}(y,u_x)u_{xx} + \widetilde{G}(y,u_x)u_{yy}+\widetilde{H}(y,u_x)u_{xy}+\widetilde{K}(y,u_x)$     \\ \hline
$\langle \gen t, \gen x, \gen u, y\gen x \rangle$       &  $  u_t=\left[{u_y^2\over u_x}\widetilde{G}(y,u_x)-{u_y\over u_x}\widetilde{\widetilde{G}}(y,u_x)+\widetilde{\widetilde{\widetilde{G}}}(y,u_x)\right]u_{xx} + \widetilde{G}(x,y,u_y)u_{yy}
+\widetilde{H}(x,y,u_y)u_{xy}+{2u_y\over u_x}\widetilde{G}(y,u_x)+\widetilde{\widetilde{G}}(y,u_x) $     \\ \hline
$\langle \gen t, \gen x, \gen y, \gen u \rangle$      &  $u_t=\widetilde{F}(u_x,u_y)u_{xx} + \widetilde{G}(u_x,u_y)u_{yy}+\widetilde{H}(u_x,u_y)u_{xy}+\widetilde{K}(u_x,u_y)$     \\
\hline
\end{tabular}
\end{center}
\section{Conclusion and future work}
We have carried out the abelain Lie symmetry classification of evolution equation \eqref{eveqn}.  There are just two inequivalent forms for one--dimensional
abelian Lie symmetry algebras  $\langle \gen t\rangle, \; \langle \gen u\rangle$ and  three inequivalent forms for two--dimensional
$\langle \gen t, \gen u \rangle, \; \langle \gen x, \gen u\rangle, \; \langle \gen u, x\gen u\rangle$. For three--dimensional we have
$\langle \gen t, \gen x, \gen u\rangle,$ $\langle \gen t, \gen u, x\gen u \rangle,$ $\langle \gen u, x\gen u, y\gen u \rangle,$ $\langle \gen x, \gen y, \gen u\rangle,$ $\langle \gen u, x\gen u, \varphi(t,x)\gen u \rangle$  with $\varphi_{xx}\neq0$ abelian Lie symmetry algebras. We have four inequivalent   four--dimensional admissible abelian Lie symmetry algebras
$\langle \gen t, \gen u, x\gen u, y\gen u \rangle,$ $\langle \gen t, \gen x, \gen u, y\gen u \rangle,$  $\langle \gen t, \gen x, \gen u, y\gen x \rangle,$ $\langle \gen t, \gen x, \gen y, \gen u \rangle$
and for  admissible abelian Lie symmetry algebras with dimension bigger than four are
\begin{eqnarray*}
\langle \gen u, x\gen u, y\gen u, \varphi(t,x,y)\gen u, q_1(t,x,y)\gen u,\dots, q_k(t,x,y)\gen u \rangle,
\;{\rm with}\;\;\varphi_{xx}\neq 0,\,(q_i)_{xx}\neq 0,\\
\langle \gen t, \gen u, x\gen u, y\gen u, \varphi(x,y)\gen u, q_1(x,y)\gen u,\dots, q_k(x,y)\gen u \rangle,\;\;
{\rm with}\;\;\varphi_{xy}\neq 0,\,(q_i)_{xy}\neq 0.
\end{eqnarray*}
The evolution equations of the form (\ref{eveqn}) admitting five-dimensional abelian Lie algebras  or with a dimension higher than  five
or if with three-dimensional  abelian Lie algebras  or with a dimension higher than three with rank-one as symmetries are linearizable.

In future work,  we show the evolution equations  \eqref{eveqn} admit ${\frak s}{\frak l}(2, {\Bbb R})$ as semi-simple symmetry algebras we classify these realizations.
\section{Acknowledgement}
I thank the Mathematics Department, Link{\"o}ping University and professor  Peter Basarab-Horwath for their hospitality during my stay in Link{\"o}ping, where this work was started. 
The author also wishs to thank Professor  R. O. Popovych for
useful comments. 

This work was supported by the Babol Noshirvani University of Technology under
Grant No. BNUT/391024/99.

\bibliographystyle{amsplain}

\begin{thebibliography}{10}
%
\bibitem{Basarab-Horwath2001}
{P}. {Basarab-Horwath}, {V}. {Lahno}, and {R}. {Zhdanov}.
\newblock {The} stucture of Lie algebras and the classification problem for partial differential equations.
\newblock {\em Acta Appl. Math.}, 69:43--94, 2001.

\bibitem{Basarab-Horwath2017}
{P}. {Basarab-Horwath}, and {F}.~{G{\"u}ng{\"o}r}.
\newblock {Linearizability} for third order evolution equations.
\newblock {\em J. Math. Phys.}, 58: 081507, 2017.

\bibitem{Basarab-Horwath2013}
{P}. {Basarab-Horwath}, {F}. {G{\"u}ng{\"o}r}, and {V}. {Lahno}.
\newblock {The} Symmetry classification of
third-order nonlinear evolution equations. {Part} {I}: {Semi-simple} {Algebras}.
\newblock {\em Acta Appl. Math.}, 124: 123--170, 2013.

\bibitem{Popovych2017}
{A}. {Bihlo} and {R}. {O}. { Popovych}.
\newblock {Group} classification of linear evolution equations.
\newblock {\em J. Math. Anal. Appl. }, 448(2): 982--1005, 2017.

\bibitem{Popovych2020}
{A}. {Bihlo}, and {R}. {O}. { Popovych}.
\newblock {Zeroth} order conservation laws of two dimensional shallow water equations
with variable bottom topography.
\newblock {\em Stud Appl Math.}, 145:291--321, 2020.

\bibitem{Demetriou2008}
{E}. {Demetriou}, {E}. { Ivanova}, and  {N}. {M}. {Sophocleous}.
\newblock {Group} analysis of $(2 + 1)$-- and $(3 + 1)$--dimensional diffusion--convection equations.
\newblock {\em J. Math. Anal. Appl.}, 348: 55--65, 2008.

\bibitem{Gazeau92}
J.P. Gazeau and P.~Winternitz.
\newblock {Symmetries} of {Variable} {Coefficient} {Korteweg}-de {Vries}
  {Equations}.
\newblock {\em J. Math. Phys.}, 33(12):4087-4102, 1992.

\bibitem{Gungor96}
F.~G{\"u}ng{\"o}r, M.~Sanielevici, and P.~Winternitz.
\newblock {Equivalence} {Classes} and {Symmetries} of the {Variable} {Coefficient}
{Kadomtsev--Petviashvili} {Equation}.
\newblock {\em Nonlinear Dynamics},  35: 381--396, 2004.

\bibitem{Gungor2004}
{F}.~{G{\"u}ng{\"o}r}, {V}.~{Lahno}, and {R}.~{Zhdanov}.
\newblock Symmetry classification of KdV-type nonlinear evolution equations.
\newblock {\em J. Math. Phys.}, 45:2280-2313, 2004.

\bibitem{Kurujyibwami2018}
{C}. {Kurujyibwami}, {P}. {Basarab-Horwath}, and {R}. {O}. { Popovych}.
\newblock Algebraic Method for Group Classification of (1+1)-Dimensional Linear Schr{\"o}dinger Equations.
\newblock {\em Acta Appl. Math.}, 157: 171-203, 2018.

\bibitem{Kontogiorgis2018}
{S}.  {Kontogiorgis}, {R}. {O}.  {Popovych}, and {C}. {Sophocleous},
\newblock Enhanced Symmetry Analysis of Two-Dimensional Burgers System
\newblock {\em Acta Appl. Math.}, 163: 91--128, 2019.

\bibitem{Kurujyibwami2020}
{C}. {Kurujyibwami}, and {R}. {O}. { Popovych}.
\newblock Equivalence groupoids and group classification of multidimensional nonlinear Schr{\"o}dinger equations.
\newblock {\em J. Math. Anal. Appl.}, 491: 124271, 2020.

\bibitem{Lahno05}
{V}.~{I}. {Lahno}, and {R}.~{Z}. {Zhdanov}.
\newblock Group classification of  nonlinear wave equations.
\newblock {\em J. Math. Phys.}, 46:1-37, 2005.

\bibitem{Olver91}
P. J. Olver.
\newblock {\em Applications of Lie Groups to Differential Equations}.
\newblock Springer, New York, 1991.

\bibitem{Rajaee2008}
{L}. {Rajaee}., {H}. {Eshraghi} and {R}. {O}. { Popovych}.
\newblock {Multi--dimensional} quasi-simple waves in weakly dissipative flows.
\newblock {\em Physica D},  237: 405-419, 2008.

\bibitem{Bihlo2012}
{Stanislav}. {Opanasenko}.,{ Alexander}. {Bihlo}., and {R}. {O}. {Popovych}.,
\newblock Equivalence groupoid and group classification of a class of variable-coefficient Burgers equations,
\newblock {\em Journal of Mathematical Analysis and Applications }  491(1):124215, 2020.

\bibitem{Vaneeva2020}
{O}. {O}. {Vaneeva}., {A}. {Bihlo} and {R}. {O}. { Popovych}.
\newblock Generalization of the algebraic method of group classification with application to
 nonlinear wave and elliptic equations.
\newblock {\em Commun Nonlinear Sci Numer Simulat},  91: 105419, 2020.

\bibitem{Zhdanov99}
{R}.~{Z}. {Zhdanov} and {V}.~{I}. {Lahno}.
\newblock {Group} {classification} of {heat} {conductivity} {equations} with a
  {nonlinear} {source}.
\newblock {\em J. Phys. A : Math. and Gen.}, 32:7405--7418, 1999.

\bibitem{Zhdanov07}
{R}.~{Z}. {Zhdanov} and {V}.~{I}. {Lahno}.
\newblock Group classification of the general second-order evolution
          equation: semi-simple invariance groups.
\newblock {\em J. Phys. A: Math. and Theor.}, 40:5083--5103, 2007.

\end{thebibliography}

\end{document}